\renewcommand{\labelitemi}{\textbullet}
\newcommand{\dist}{\operatorname{dist}}
\newcommand{\diam}{\operatorname{diam}}
\def\mod{\operatorname{mod}}
\renewcommand{\O}{\mathcal{O}}
\newtheorem{thm}{Theorem}  
\newtheorem{lem}[thm]{Lemma}
\newtheorem{cor}[thm]{Corollary}
\newtheorem{pro}[thm]{Proposition}
\newtheorem{defi}[thm]{Definition}
\numberwithin{thm}{section}
\newcommand{\thmref}[1]{Theorem~\ref{thm:#1}}
\newcommand{\thmrefss}[3]{Theorems~\ref{thm:#1},~\ref{thm:#2} and~\ref{thm:#3}}
\newcommand{\shortthmref}[1]{Thm.~\ref{thm:#1}}
\newcommand{\shortthmrefs}[2]{Thm.~\ref{thm:#1} and~\ref{thm:#2}}
\newcommand{\lemref}[1]{Lemma~\ref{lem:#1}}
\newcommand{\shortcorref}[1]{Cor.~\ref{cor:#1}}
\newcommand{\mycorref}[1]{Corollary~\ref{cor:#1}}
\newcommand{\defref}[1]{Definition~\ref{def:#1}}
\newcommand{\defrefs}[2]{Definitions~\ref{def:#1} and~\ref{def:#2}}
\newcommand{\figref}[1]{Figure~\ref{fig:#1}}
\newcommand{\tabref}[1]{Table~\ref{tab:#1}}
\newcommand{\tabrefs}[2]{Tables~\ref{tab:#1} and~\ref{tab:#2}}
\newcommand{\secref}[1]{Section~\ref{sec:#1}}
\newcommand{\eq}[1]{equation~\eqref{eq:#1}}
\newcommand{\eqs}[2]{equations~\eqref{eq:#1} and~\eqref{eq:#2}}
\newcommand{\mylem}[2]{\begin{lem}\label{lem:#1}#2\end{lem}}
\newcommand{\mythm}[2]{\begin{thm}\label{thm:#1}#2\end{thm}}
\newcommand{\Ex}[1]{\ensuremath{\operatorname{\mathbf{E}}\left[#1\right]}}
\newcommand{\Exbig}[1]{\ensuremath{\operatorname{\mathbf{E}}\big[#1\big]}}
\newcommand{\cN}{\ensuremath{\mathcal{N}}}
\newcommand{\cM}{\ensuremath{\mathcal{M}}}
\newcommand{\calC}{\ensuremath{\mathcal{C}}}
\newcommand{\cG}{\ensuremath{\mathcal{G}}}
\newcommand{\bP}{\ensuremath{\mathbf{P}}}
\newcommand{\bD}{\ensuremath{\mathbf{D}}}
\newcommand{\bZ}{\ensuremath{\mathbf{Z}}}
\newcommand{\bmdEC}{\ensuremath{\widetilde{\bm{\mathsf{EC}}}}}
\newcommand{\bmdVC}{\ensuremath{\widetilde{\bm{\mathsf{VC}}}}}
\newcommand{\bmrEC}{\ensuremath{\bm{\mathsf{EC}}}}
\newcommand{\bmrVC}{\ensuremath{\bm{\mathsf{VC}}}}
\newcommand{\dH}{\ensuremath{\widetilde{\mathsf{H}}}}
\newcommand{\dEC}{\ensuremath{\widetilde{\mathsf{EC}}}}
\newcommand{\dVC}{\ensuremath{\widetilde{\mathsf{VC}}}}  
\newcommand{\rH}{\ensuremath{\mathsf{H}}}
\newcommand{\rEC}{\ensuremath{\mathsf{EC}}}
\newcommand{\rVC}{\ensuremath{\mathsf{VC}}}
\newcommand{\tN}{\ensuremath{\widetilde{N}}}
\newcommand{\tx}{\ensuremath{\widetilde{x}}}
\newcommand{\ts}{\ensuremath{\widetilde{s}}}
\newcommand{\td}{\ensuremath{\widetilde{d}}}
\newcommand{\tr}{\ensuremath{\widetilde{r}}}
\newcommand{\tkappa}{\ensuremath{\widetilde{\kappa}}}
\newcommand{\kary}{$k$\nobreakdash-ary\xspace}
\newcommand{\Oh}{\mathcal{O}}
\newcommand{\tsum}{\textstyle\sum}
\DeclareSymbolFont{AMSb}{U}{msb}{m}{n}
\newcommand{\N}{{\mathbb{N}}}
\newcommand{\Z}{{\mathbb{Z}}}
\let\oldsqrt\sqrt
\def\hksqrt{\mathpalette\DHLhksqrt}
\def\DHLhksqrt#1#2{\setbox0=\hbox{$#1\oldsqrt{#2\,}$}\dimen0=\ht0
   \advance\dimen0-0.2\ht0
   \setbox2=\hbox{\vrule height\ht0 depth -\dimen0}%
   {\box0\lower0.4pt\box2}}
\renewcommand\sqrt\hksqrt
\renewcommand{\leq}{\leqslant}
\renewcommand{\geq}{\geqslant}
\renewcommand{\epsilon}{\varepsilon}
\title{The Cover Time of Deterministic Random Walks}
\author{Tobias Friedrich\\
\small Max-Planck-Institut für Informatik\\[-0.8ex]
\small Campus E1.4, 66123 Saarbrücken\\[-0.8ex]
\small Germany
\and
Thomas Sauerwald\\
\small Simon Fraser University\\[-0.8ex]
\small Burnaby B.C. V5A 1S6\\[-0.8ex]
\small Canada
}
\date{}
\begin{document}

\maketitle

\begin{abstract}
    \noindent
    The rotor router model is a popular deterministic analogue of a random walk on a
    graph. Instead of moving to a random neighbor, the neighbors are served
    in a fixed order.
    We examine how fast this ``deterministic random walk''
    covers all vertices (or all edges).
    We present general techniques to derive upper bounds for the vertex and edge
    cover time and derive matching lower bounds
    for several important graph classes.
    Depending on the topology, the deterministic random walk can be asymptotically
    faster, slower or equally fast as the classic random walk.
    We also examine the short term behavior of deterministic random walks,
    that is, the time to visit a fixed small number of vertices or edges.
\end{abstract}

\section{Introduction}

\noindent We examine the cover time of a simple deterministic process
known under various names such as ``rotor router model'' or ``Propp machine.''
It can be viewed as an attempt to derandomize random walks on graphs $G=(V,E)$.
In the model each vertex~$x\in V$ is equipped with
a ``rotor'' together with
a fixed sequence of the neighbors of~$x$ called ``rotor sequence.''
While a particle (chip, coin, \ldots) performing a random walk leaves
a vertex in a random direction,
the deterministic random walk always goes in the direction the rotor is pointing.
After a particle is sent, the rotor is updated to the next position of its rotor sequence.
We examine how fast this model covers all vertices and/or edges,
when one particle starts a walk from an arbitrary vertex.

\subsection{Deterministic random walks}

\noindent
The idea of rotor routing appeared independently several times in the literature.
First under the name ``Eulerian walker'' by \citet{Priezzhev1996},
then by \citet{WagnerLB99} as ``edge ant walk'' and
later by \citet{DumitriuTW03} as ``whirling tour.''
Around the same time it was also popularized
by James Propp~\citep{Kleber} and analyzed by \citet{CooperSpencer} who called it the ``Propp machine.''
Later the term ``deterministic random walk'' was established in Doerr et al.~\citep{1DPropp,2DPropp}.
For brevity, we omit the ``random'' and just refer to ``deterministic walk.''

\citet{CooperSpencer} showed the following remarkable
similarity between the expectation of a random walk and
a deterministic walk with cyclic rotor sequences:
If an (almost) arbitrary distribution of particles is
placed on the vertices of an infinite grid~$\Z^d$ and does a simultaneous walk in the
deterministic walk model, then at all times and on each vertex, the number of particles
deviates from the expected number the standard random walk would have gotten there,
by at most a constant.  This constant is precisely known for the cases
$d=1$~\citep{1DPropp} and~$d=2$~\citep{2DPropp}.
It is further known
that there is no such constant for infinite trees~\citep{TreePropp}.
\citet{LevinePeres09} also extensively studied a related model called
internal diffusion-limited aggregation~\citep{IDLA1,IDLA2}
for deterministic walks.

As in these works, our aim is to understand random walk and their deterministic counterpart from a theoretical viewpoint. However, we would like to mention that the rotor router
mechanism also led to improvements in applications.  With a random initial
rotor direction, the quasirandom rumor spreading protocol broadcasts
faster in some networks than its
random counterpart~\citep{DFS08,DFS09,DFKS09,ADHP2009}.
A similar idea is used in
quasirandom external mergesort~\citep{BarveGV97} and quasirandom load balancing~\citep{FGS10}.

We consider our model of a deterministic walk based on rotor routing to be a simple and canonic derandomization
of a random walk which is not tailored for search problems.
On the other hand, there is a vast literature on local deterministic agents/robots/ants patrolling or covering
all vertices or edges of a graph (e.g.~\citep{Rei08,GasieniecPRZ07,WagnerLB99,WagnerLB96,Korf90}).
For instance, \citet{CIKK09} studied a model where the walk uses adjacent edges which have been traversed the smallest number of times. However, all of these models are more specialized and require additional counters/identifiers/markers/pebbles on the vertices or edges of the explored graph.

\renewcommand{\arraystretch}{1.08}
\begin{table*}[t]
    \footnotesize
    \begin{center}
    \begin{tabular}{|l|cl|cl|}
    \hline
    \multirow{2}{*}{\bf Graph class $\bm G$} &
    \multicolumn{2}{|c|}{\bf Vertex cover time $\bmrVC\bm{(G)}$} &
    \multicolumn{2}{|c|}{
        \begin{minipage}[b]{0.001mm}\vspace*{.41cm}\end{minipage}
        \bf Vertex cover time $\bmdVC\bm{(G)}$
        \begin{minipage}[b]{0.001mm}\vspace*{.41cm}\end{minipage}
    }
    \\
    &
    \multicolumn{2}{|c|}{\bf of the random walk} &
    \multicolumn{2}{|c|}{\bf of the deterministic walk} \\
    \hline
    \hline
    \kary tree, $k=\Oh(1)$
    & $\Theta(n \log^2 n)$ & \cite[Cor.~9]{Zu92}
    & $\Theta(n \log n)$ & (\shortthmrefs{lower:tree}{upper:tree})
    \\
    star
    & $\Theta(n \log n)$ & \cite[Cor.~9]{Zu92}
    & $\Theta(n)$ & (\shortthmref{lower:dense})
    \\
    \hline
    cycle
    & $\Theta(n^2)$ & \cite[Ex.~1]{Lovasz93random}
    & $\Theta(n^2)$ & (\shortthmrefs{lower:cycle}{upper:torus})
    \\
    lollipop graph
    & $\Theta(n^3)$ & \cite[Thm.~2.1]{Lovasz93random}
    & $\Theta(n^3)$ & (\shortthmrefs{lower:lollipop}{upper:lollipop})
    \\
    expander
    & $\Theta(n \log n)$ & \cite[Cor.~6]{BK89}, \cite{Rubinfeld90}
    & $\Theta(n \log n)$ & (\shortthmref{lower:expander}, \shortcorref{upper:expander})
    \\
    \hline
    two-dim.~torus
    & $\Theta(n \log^2 n)$ & \cite[Thm.~4]{Zu92}, \cite[Thm.~6.1]{CRRST97}
    & $\Theta(n^{1.5})$ & (\shortthmrefs{lower:torus}{upper:torus})
    \\
    $d$-dim.~torus ($d\geq3$)
    & $\Theta(n \log n)$ & \cite[Cor.~12]{Zu92}, \cite[Thm.~6.1]{CRRST97}
    & $\O(n^{1+1/d})$ & (\shortthmref{upper:torus})
    \\
    hypercube
    & $\Theta(n \log n)$ & \cite[p.~372]{Al83}, \cite[Sec.~5.2]{Palacios94}
    & $\Theta(n \log^2 n)$ & (\shortthmrefs{lower:hypercube}{upper:hypercube})
    \\
    complete
    & $\Theta(n \log n)$ & \cite[Ex.~1]{Lovasz93random}
    & $\Theta(n^2)$ & (\shortthmrefs{lower:dense}{upper:complete})
    \\
    \hline
    \end{tabular}
    \end{center}
    \caption{Comparison of the vertex cover time of random and deterministic walk on different graphs ($n=|V|$).}
    \vspace*{-.5\baselineskip}
    \label{tab:vertex}
\end{table*}

\subsection{Cover time of random walks}

\noindent
In his survey, Lov\'{a}sz~\citep{Lovasz93random} mentions
three important measures of a random walk: cover time,
hitting time, and mixing time.  These three (especially the first two) are closely related,
here we will mainly concentrate on the \emph{cover time}
which is the expected number of steps to visit every node.
The study of the cover time of random walks on graphs was initiated in 1979. Motivated by
the space-complexity of the $s$--$t$-connectivity problem, \citet{AKLLR79}
showed that the cover time is upper bounded by $\Oh(|V| \, |E|)$ for any graph.
For regular graphs,
\citet{Fe97} gave an improved upper bound of $\Oh(|V|^2)$ for the cover time.
\citet{BK89} proved several bounds which rely on the spectral
gap of the transition matrix. Their bounds imply that the cover time on a
regular expander graph is $\Theta(|V| \log |V|)$.
In addition, many papers are
devoted to the study of the cover time on special graphs such as
hypercubes~\cite{Al83},
random graphs~\cite{CF08,CF07,CF05b},
random regular graphs~\cite{CF05},
random geometric graphs~\cite{CF09},
and planar graphs~\cite{JS00}.
A general lower bound of $(1-o(1)) \, |V| \ln |V|$ for any graph was shown by Feige~\cite{Fe95b}.

A natural variant of the cover time is the so-called \emph{edge cover
time}, which measures the expected number of steps to traverse all edges.
Amongst other results, \citet{Zu91,Zu92} proved that the edge cover time
of general graphs is at least $\Omega(|E| \log |E|)$ and at most $\Oh(|V| \, |E|)$.
Finally, \citet{BF93,BF96journal} considered the time until a certain number of vertices (or edges) has been visited.

\begin{table*}[t]
    \footnotesize
    \begin{center}
    \begin{tabular}{|l|cl|cl|}
    \hline
    \multirow{2}{*}{\bf Graph class $\bm G$} &
    \multicolumn{2}{|c|}{\bf Edge cover time $\bmrEC\bm{(G)}$} &
    \multicolumn{2}{|c|}{
        \begin{minipage}[b]{0.001mm}\vspace*{.42cm}\end{minipage}
        \bf Edge cover time $\bmdEC\bm{(G)}$
        \begin{minipage}[b]{0.001mm}\vspace*{.42cm}\end{minipage}
    }
    \\
    &
    \multicolumn{2}{|c|}{\bf of the random walk} &
    \multicolumn{2}{|c|}{\bf of the deterministic walk} \\
    \hline
    \hline
    \kary tree, $k=\Oh(1)$
    & $\Theta(n \log^2 n)$ & \cite[Cor.~9]{Zu92}
    & $\Theta(n \log n)$ & (\shortthmrefs{lower:tree}{upper:tree})
    \\
    star
    & $\Theta(n \log n)$ & \cite[Cor.~9]{Zu92}
    & $\Theta(n)$ &  (\shortthmref{lower:dense})
    \\
    complete
    & $\Theta(n^2 \log n)$ & \cite{Zu91,Zu92} 
    & $\Theta(n^2)$ & (\shortthmrefs{lower:dense}{upper:complete})
    \\
    \hline
    expander
    & $\Theta(n \log n)$ & \cite{Zu91,Zu92}
    & $\Theta(n \log n)$ & (\shortthmref{lower:expander}, \shortcorref{upper:expander})
    \\
    cycle
    & $\Theta(n^2)$ & \cite[Ex.~1]{Lovasz93random}
    & $\Theta(n^2)$ & (\shortthmrefs{lower:cycle}{upper:torus})
    \\
    lollipop graph
    & $\Theta(n^3)$ & \cite[Thm.~2.1]{Lovasz93random}, \cite[Lem.~2]{Zu91} 
    & $\Theta(n^3)$ & (\shortthmrefs{lower:lollipop}{upper:lollipop})
    \\
    hypercube
    & $\Theta(n \log^2 n)$ & \cite{Zu91,Zu92}
    & $\Theta(n \log^2 n)$ & (\shortthmrefs{lower:hypercube}{upper:hypercube})
    \\
    \hline
    two-dim.~torus
    & $\Theta(n \log^2 n)$ & \cite{Zu91,Zu92} 
    & $\Theta(n^{1.5})$ & (\shortthmrefs{lower:torus}{upper:torus})   \\
    $d$-dim.~torus ($d\geq3$)
    & $\Theta(n \log n)$ & \cite{Zu91,Zu92}  
    & $\O(n^{1+1/d})$ & (\shortthmref{upper:torus})
    \\
    \hline
    \end{tabular}
    \end{center}
    \caption{Comparison of the edge cover time of random and deterministic walk on different graphs ($n=|V|$).}
    \vspace*{-.5\baselineskip}
    \label{tab:edge}
\end{table*}

\subsection{Cover time of deterministic walks (our results)}
\label{sec:ourresults}

\noindent
For the case of a cyclic rotor sequence
the edge cover time is known to be $\Theta(|E|\,\diam(G))$ (see \citet{YanovskiWB03} for the upper and \citet{Bampas09} for the lower bound).
It is further known that there are rotor sequences such that
the edge cover time is precisely $|E|$~\citep{Priezzhev1996}.
We allow \emph{arbitrary} rotor sequences and
present three techniques to upper bound the edge cover time 
based on the local divergence~(\shortthmref{divergencecover}),
expansion of the graph~(\shortthmref{upper:expander}),
and a corresponding flow problem~(\shortthmref{flowcovernonregular}).
With these general theorems it is easy to prove upper
bounds for expanders, complete graphs, torus graphs, hypercubes, \kary trees
and lollipop graphs.
Though these bounds are known to be tight, it is illuminating to study
which setup of the rotors matches these upper bounds.
This is the motivation for \secref{lower} which presents matching lower bounds
for all forementioned graphs by describing the precise setup of the rotors.

It is not our aim to prove superiority of the deterministic walk,
but it is instructive to compare our results for the vertex and edge cover time
with the respective bounds of the random walk.
\tabrefs{vertex}{edge} group the graphs in three classes depending whether
random or deterministic walk is faster.
In spite of the strong adversary (as the order of the rotors is completely arbitrary),
the deterministic walk is surprisingly efficient.
It is known that the edge cover time of random walks can be asymptotically
larger than its vertex cover time.  Somewhat unexpectedly, this is not the case for the deterministic walk.
To highlight this issue, let us consider hypercubes and complete graph.
For these graphs, the vertex cover time of the deterministic
walk is larger while the edge cover time is smaller (complete graph) or equal (hypercube)
compared to the random walk.

Analogous to the results of \citet{BF93,BF96journal} for random walks,
we also analyze the short term behavior of the deterministic walk in \secref{shortterm}.
As an example observe that \thmref{shortterm} proves that for $1\leq\alpha<2$
the deterministic walk only needs $\O(|V|^\alpha)$ steps to visit $|V|^\alpha$ edges
of any graph with minimum degree~$\Omega(n)$ while the random walk
needs $\O(|V|^{2\alpha-1})$
steps according to \citep{BF93,BF96journal} (cf.~\tabref{shortterm}).

\section{Models and Preliminaries}

\subsection{Random Walks}
\label{sec:model:rnd}

\noindent
We consider weighted random walks on finite connected graphs $G=(V,E)$. For this, we
assign every pair of vertices $u,v\in V$ a weight $c(u,v) \in \N_{0}$ (rational
weights can be handled by scaling) such that
$c(u,v)=c(v,u)>0$ if $\{u,v\}\in E$ and $c(u,v)=c(v,u)=0$ otherwise. This
defines transition probabilities $\bP_{u,v}:=c(u,v) / c(u)$ with
$c(u):=\sum_{w\in V} c(u,w)$. So, whenever a random walk is at a vertex $u$ it moves
to a vertex $v$ in the next step with probability $\bP_{u,v}$.
Moreover, note that for all $u,v\in V$, $c(u,v)=c(v,u)$
while $\bP_{u,v}\neq\bP_{v,u}$ in general. This defines a time-reversible,
irreducible, finite Markov chain $X_0,X_1,\ldots$ with transition matrix~$\bP$ (cf.~\cite{AF02}).
The $t$-step probabilities of the walk can be obtained by taking the $t$-th
power of~$\bP^t$.
In what follows, we prefer to use the term weighted
random walk instead of Markov chain to emphasize the limitation to rational
transition probabilities.

It is intuitively clear that a random walk with large weights~$c(u,v)$
is harder
to approximate deterministically with a simple rotor sequence.
To measure this, we use $c_{\max}:=\max_{u,v\in V} c(u,v)$.
An important special case is the \emph{unweighted random walk}
with $c(u,v)\in\{0,1\}$ for all $u,v\in V$ on a simple graph.
In this case, $\bP_{u,v}=1/\deg(u)$ for all $\{u,v\} \in E$, and $c_{\max}=1$.
Our general results hold for weighted (random) walks.  However,
the derived bounds for specific graphs are only stated for unweighted walks.
With \emph{random walk} we mean unweighted random walk and
if a random walk is allowed to be weighted we will emphasize this.

For weighted and unweighted random walks we define for a graph~$G$,
\vspace*{-.3\baselineskip}
\begin{list}{\labelitemi}{\leftmargin=1.81em}  
    \setlength{\itemsep}{0pt}
    \setlength{\parskip}{0pt}
    \item cover time:
        $\rVC(G) =
            \max_{u\in V}
            \Ex{ \min
                     \big\{ t\geq0 \colon \textstyle\bigcup_{\ell=0}^t \{ X_\ell \} = V \big\}\mid X_0 = u }$,
    \item edge cover time:
        $\rEC(G) =
            \max_{u\in V} \Ex{ \min
             \big\{ t\geq0 \colon \textstyle\bigcup_{\ell=1}^t \{ X_{\ell-1},X_{\ell} \} = E\big\}\mid X_0 = u }$.
\end{list}
The (edge) cover time of a graph class~$\cG$ is
the maximum of the (edge) cover times of all graphs of the graph class.
Observe that $\rVC(\cG)\leq\rEC(G)$ for all graphs~$G$.
For vertices $u,v\in V$ we further define
\vspace*{-.3\baselineskip}
\begin{list}{\labelitemi}{\leftmargin=1.81em}  
    \setlength{\itemsep}{0pt}
    \setlength{\parskip}{0pt}
    \item (expected) hitting time:
        $\rH(u,v) = \Ex{ \min \left\{ t \geq 0 \colon X_{t} = v \right\} \mid X_0 = u } $,
    \item stationary distribution:
        $\pi_u = c(u) / \sum_{w\in V} c(w)$.
\end{list}

\subsection{Deterministic Random Walks}
\label{sec:model:det}

\noindent
We define weighted deterministic random walks (or short:
weighted deterministic walks) based on rotor routers as introduced by \citet{HP09}.
For a weighted random walk, we
define the corresponding weighted deterministic walk as follows.
We use a tilde (\:$\widetilde\ $\:) to mark variables
related to the deterministic walk.
To each vertex~$u$ we assign a rotor sequence
$\ts(u)=(\ts(u,1),\ts(u,2),\ldots,\ts(u,\td(u)))\in V^{\td(u)}$
of arbitrary length~$\td(u)$ such that
the number of times a neighbor~$v$ occurs in the rotor sequence~$\ts(u)$
corresponds to the transition probability to go from~$u$ to~$v$ in the weighted random walk,
that is,
$
    \bP_{u,v} = | \{ i\in[\td(u)] \colon \ts(u,i)=v \} | / \td(u)
$
with $[\td(u)]:=\{1,\ldots,\td(u)\}$.
For a weighted random walk,~$\td(u)$ is a multiple
of the lowest common denominator of the transition probabilities from~$u$
to its neighbors.
For the standard random walk, a corresponding canonical deterministic walk
would be $\td(u)=\deg(u)$ and
a permutation of the neighbors of~$u$ as rotor sequence~$\ts(u)$.
As the length of the rotor sequences crucially influences the performance
of a deterministic walk, we set $\tkappa:=\max_{u\in V} \td(u) / \deg(u)$ (note that $\tkappa \geq 1$).
The set~$V$ together with~$\ts(u)$ and~$\td(u)$ for all $u\in V$
defines the deterministic walk, sometimes abbreviated~$\bD$.
Note that every deterministic walk has a unique corresponding random walk
while there are many deterministic walks corresponding to one random walk.

We also assign to each vertex~$u$ an integer $\tr_t(u)\in[\td(u)]$ corresponding
to a rotor at~$u$ pointing to~$\ts(u,\tr_t(u))$ at step $t$.
A \emph{rotor configuration}~$C$ describes the rotor sequences~$\ts(u)$ and
initial rotor directions~$\tr_0(u)$ for all vertices $u\in V$.
At every time step~$t$ the walk moves from
$\tx_t$ in the direction of the current rotor of~$\tx_t$ and this rotor is
incremented\footnote{In this respect we slightly deviate from the model of \citet{HP09}
who first increment the rotor and then move the chip, but this change
is insignificant here.}
to the next position according to the rotor sequence~$\ts(\tx_t)$ of~$\tx_t$.
More formally, for given~$\tx_t$ and $\tr_t(\cdot)$ at time~$t\geq0$
we set
$\tx_{t+1}:=s(\tx_t,\tr_t(\tx_t))$,
$\tr_{t+1}(\tx_t):=\tr_t(\tx_t) \mod \td(\tx_t)+1$, and
$\tr_{t+1}(u):=\tr_{t}(u)$ for all $u\neq \tx_t$.
Let~$\calC$ be the set of all possible rotor configurations (that is,
$\ts(u)$, $\tr_0(u)$ for $u\in V$)
of a corresponding deterministic walk for a fixed weighted random walk (and fixed rotor sequence length $\td(u)$ for each $u \in V$).
Given a rotor configuration $C\in \calC$ and an initial location $\tx_0\in V$,
the vertices $\tx_0,\tx_1,\ldots\in V$ visited by a deterministic walk
are completely determined.

For deterministic walks we define for a graph~$G$ and vertices $u,v\in V$,
\vspace*{-.3\baselineskip}
\begin{list}{\labelitemi}{\leftmargin=1.81em}  
    \setlength{\itemsep}{0pt}
    \setlength{\parskip}{0pt}
    \item deterministic cover time:
        $\dVC(G) = \max_{\tx_0\in V} \max_{C\in \calC} \min
                 \big\{ t\geq0 \colon \textstyle\bigcup_{\ell=0}^t \{ \tx_\ell \} = V\big\}$,
    \item deterministic edge cover time:\\
        $\dEC(G) = \max_{\tx_0\in V} \max_{C\in \calC} \min
             \big\{ t\geq0 \colon \textstyle\bigcup_{\ell=1}^t \{ \tx_{\ell-1},\tx_{\ell}\} = E\big\}$,
    \item hitting time:
        $\dH(u,v) = \max_{C\in \calC} \min \left\{ t \geq 0 \colon \tx_{t} = u, \tx_0 = v  \right\}$.
\end{list}
Note that the definition of the deterministic cover time
takes the \emph{maximum} over all possible rotor configurations, while
the cover time of a random walk takes the \emph{expectation} over the random decisions.
Also, $\dVC(G)\leq\dEC(G)$ for all graphs~$G$.
We further define
for fixed configurations $C\in\calC$, $\tx_0$, and vertices $u,v\in V$,
\vspace*{-.3\baselineskip}
\begin{list}{\labelitemi}{\leftmargin=1.81em}  
    \setlength{\itemsep}{0pt}
    \setlength{\parskip}{0pt}
    \item number of visits to vertex $u$: 
        $\tN_t(u) = \big| \{ 0\leq \ell \leq t \colon \tx_\ell = u \} \big|$,
    \item number of traversals of a directed edge $u\to v$: \\
        $\tN_t(u\to v) = \big| \{ 1\leq \ell \leq t \colon (\tx_{\ell-1},\tx_\ell) = (u,v) \} \big|$.
\end{list}

\subsection{Graph-Theoretic Notation}

\noindent
We consider finite, connected graphs $G=(V,E)$.
Unless stated differently, $n:=|V|$ is the number vertices and
$m:=|E|$ the number of undirected edges.
By~$\delta$ and~$\Delta$ we denote the minimum and maximum degree of the graph, respectively.
For a pair of vertices $u,v \in V$,
we denote by $\dist(u,v)$ their distance, i.e., the length of a shortest path between them.
For a vertex $u \in V$, let $\Gamma(u)$ denote the set of all neighbors of~$u$.
More generally, for any $k \geq 1$,
$\Gamma^k(u)$ denotes the set of vertices~$v$ with $\dist(u,v)=k$.
For any subsets $S, T \subseteq V$, $E(S)$ denotes the set of edges with one endpoint in~$S$ and
$E(S,T)$ denotes the edges~$\{u,v\}$ with $u\in S$ and  $v\in T$.
As a walk is something directed, we also have to argue about directed edges
though our graph~$G$ is undirected.  In slight abuse of notation,
for~$\{u,v\} \in E$ we might also write $(u,v)\in E$ or $(v,u)\in E$.
Finally, all logarithms used here are to the base of~$2$.

\section{Upper Bounds on the Deterministic Cover Times}
\label{sec:upper}

\noindent
Very recently, \citet{HP09} proved that several natural quantities of the
weighted deterministic walk as defined in \secref{model:det}
concentrate around the respective expected values
of the corresponding weighted random walk.
To state their result formally, we set for a vertex $v\in V$,
\begin{equation}
\label{eq:K}
    K(v) := \max_{u \in V} \rH(u,v) +
         \frac{1}{2} \biggl( \frac{\td(v)}{\pi_v} +
         \sum_{i,j \in V} \td(i) \,\bP_{i,j}
            \left| \rH(i,v) - \rH(j,v) - 1 \right| \biggr).
\end{equation}

\begin{thm}[{\cite[Thm.~4]{HP09}}]
    \label{thm:propp}
    For 
    all weighted deterministic walks,
    all vertices~$v \in V$,
    and all times~$t$,
    \[
        \bigg| \pi_v - \frac{\tN_t(v)}t \bigg|
        \leq \frac{K(v)\,\pi_v}t.
    \]
\end{thm}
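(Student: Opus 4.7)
My plan is to use the hitting-time function $h(u) := \rH(u,v)$ as a potential and exploit its harmonic property for the random walk: $h(u) = 1 + \sum_w \bP_{u,w} h(w)$ for every $u \neq v$, while $h(v) = 0$ and $\sum_w \bP_{v,w} h(w) = 1/\pi_v - 1$ by the standard return-time identity. The key structural fact is that at each vertex $u$ the rotor, over one full cycle, selects each neighbor $w$ exactly $\td(u) \bP_{u,w}$ times; so the observed mean of $h$ along a complete cycle coincides exactly with $\mu_u := \sum_w \bP_{u,w} h(w)$, and only partial cycles contribute a discrepancy. Controlling that discrepancy is the heart of the argument.

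The first step is to compute $\sum_{\ell=1}^{t} h(\tx_\ell)$ in two ways. Counting by arrivals yields $\sum_u A_u\, h(u)$ with $A_u = \tN_t(u) - [\tx_0 = u]$. Counting by departures, and decomposing $\tN_t(u \to w) = D_u\, \bP_{u,w} + \epsilon_{u,w}$ with $D_u = \tN_t(u) - [\tx_t = u]$ and $\sum_w \epsilon_{u,w} = 0$, gives $\sum_u D_u\, \mu_u + \sum_u e_u$, where $e_u := \sum_w \epsilon_{u,w}\, h(w)$. Equating the two expressions, substituting $h(u) - \mu_u = 1$ for $u \neq v$ and $h(v) - \mu_v = 1 - 1/\pi_v$, and using $\sum_u D_u = t$, the bookkeeping collapses to the master identity
\[
    D_v / \pi_v \;=\; t + h(\tx_t) - h(\tx_0) - \tsum_u e_u.
\]
Since $\tN_t(v) = D_v + [\tx_t = v]$ and $0 \leq h(\cdot) \leq \max_u \rH(u,v)$, this immediately gives
\[
    \big|\tN_t(v) - \pi_v t\big| \;\leq\; \pi_v \max_u \rH(u,v) + [\tx_t = v] + \pi_v \tsum_u |e_u|,
\]
which already identifies the $\max_u \rH(u,v)$ summand of $K(v)$.

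Next I would bound the rotor discrepancies $|e_u|$ using the rotor's periodicity. Because $h(\ts(u,k)) - \mu_u$ sums to zero over $k = 1, \ldots, \td(u)$, and $e_u$ is a partial sum of a cyclic shift of this zero-sum sequence, the standard ``swing'' bound---that any partial sum of a zero-sum sequence has absolute value at most half the sum of absolute increments---yields $|e_u| \leq \tfrac{\td(u)}{2}\sum_w \bP_{u,w}\,|h(w) - \mu_u|$. For every $u \neq v$ the harmonic identity $\mu_u = h(u) - 1$ converts this into $\tfrac{\td(u)}{2}\sum_w \bP_{u,w}|h(u) - h(w) - 1|$, which is precisely the $u$-contribution to the double sum in $K(v)$.

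The main obstacle is the vertex $u = v$ itself, where the harmonic identity fails. Here I would exploit $\sum_w \epsilon_{v,w} = 0$ to re-express $e_v = -\sum_w \epsilon_{v,w}\,(h(v) - h(w) - 1)$, and then combine the crude bound $|\epsilon_{v,w}| \leq \td(v)\, \bP_{v,w}$ with the identity $\sum_w \bP_{v,w}(h(w) + 1) = 1/\pi_v$ (which follows from $h(v) = 0$ and the return-time formula) to obtain $\pi_v |e_v| \leq \td(v)$. Together with the boundary term $[\tx_t = v]$, this block is absorbed exactly by the $\tfrac{1}{2}(\td(v)/\pi_v)$ summand of $K(v)$ and the $i = v$ slice of its double sum, each of which contributes $\td(v)/2$ after multiplication by $\pi_v$. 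Combining everything and carefully tracking the $u = v$ block yields the claimed inequality $|\tN_t(v) - \pi_v t| \leq K(v)\, \pi_v$.
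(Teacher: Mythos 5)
The paper itself offers no proof of this statement---it is imported verbatim from Holroyd and Propp \cite[Thm.~4]{HP09}---so the only meaningful comparison is with their argument, and your reconstruction follows essentially that route. The machinery is sound: the double counting of $\sum_{\ell=1}^{t}\rH(\tx_\ell,v)$ by arrivals and by departures, harmonicity of $u\mapsto\rH(u,v)$ away from $v$ together with Kac's formula at $v$, the resulting master identity $D_v/\pi_v=t+\rH(\tx_t,v)-\rH(\tx_0,v)-\sum_u e_u$, and the observation that each $e_u$ is a cyclic-window partial sum of the zero-sum sequence $\rH(\ts(u,k),v)-\mu_u$, whence the one-half ``swing'' bound reproduces exactly the $i\neq v$ part of the double sum in $K(v)$. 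All of this checks out.

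The one place the accounting does not close as written is the $u=v$ block. Your budget there is $\frac{\td(v)}{2}$ from the explicit $\frac{1}{2}\,\td(v)/\pi_v$ term plus $\frac{\td(v)}{2}$ from the $i=v$ slice of the double sum (which equals $\frac{\td(v)}{2\pi_v}$ because $\sum_w\bP_{v,w}(\rH(w,v)+1)=1/\pi_v$), i.e.\ $\td(v)$ in total after multiplying by $\pi_v$. But the crude bound $|\epsilon_{v,w}|\le\td(v)\,\bP_{v,w}$ already consumes that entire budget, giving $\pi_v|e_v|\le\td(v)$ and leaving nothing to absorb the boundary indicator $[\tx_t=v]$; as written your argument only yields $|\tN_t(v)-\pi_v t|\le K(v)\,\pi_v+1$. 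The repair is to use that $e_v$ is the sum, over a cyclic window of length $r=D_v\bmod\td(v)\le\td(v)-1$, of the terms $\rH(\ts(v,k),v)+1-1/\pi_v$, each of which is at least $1-1/\pi_v$; comparing the window with its complement gives $\pi_v|e_v|\le(\td(v)-1)(1-\pi_v)\le\td(v)-1$, which restores the missing unit. (Alternatively, Kac's formula gives $\pi_v\max_u\rH(u,v)\ge1-\pi_v$, and combining this slack in the first summand of $K(v)$ with the sharper window bound $\pi_v|e_v|\le\td(v)(1-\pi_v)$ also closes the gap.) With either repair the proof is complete.
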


\noindent
Roughly speaking, \thmref{propp} states that the proportion
of time spent by the weighted deterministic walk concentrates around the
stationary distribution for all configurations $C\in\calC$
and all starting points~$\tx_0$.
To quantify the hitting or cover time with \thmref{propp},
we choose $t=K(v)+1$ to get $\tN_t(v) > 0$.  To get a bound for the edge cover time,
we choose $t=3K(v)$ and observe that then $\tN_t(v) \geq 2 \pi_v K(v) > \td(v)$.
This already shows the following corollary.

\begin{cor}\label{cor:propp}
    For 
    all weighted deterministic walks,
    \begin{align*}
        \dH(u,v)
        &\leq
        K(v)+1 \qquad\qquad\qquad\text{for all~$u,v\in V$,}
        \\
        \dVC(G)
        &\leq
        \max_{v\in V}K(v)+1,
        \\
        \dEC(G)
        &\leq
        3\max_{v\in V}K(v).
    \end{align*}
\end{cor}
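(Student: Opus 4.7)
The plan is to invoke Theorem~\ref{thm:propp} three times, choosing a suitable $t$ in each case and converting the multiplicative concentration on $\tN_t(v)/t$ into a combinatorial visit/traversal guarantee.

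First, I rewrite the conclusion of Theorem~\ref{thm:propp} in the one-sided form $\tN_t(v) \geq \pi_v\,(t - K(v))$, which will be the only consequence used. For the hitting time bound, I set $t = K(v)+1$ and note that the right-hand side equals $\pi_v > 0$. Since $\tN_t(v)$ is a non-negative integer, $\tN_t(v) \geq 1$, so there is some $0 \leq \ell \leq t$ with $\tx_\ell = v$ regardless of the starting vertex $\tx_0$ and the configuration $C \in \calC$. This gives $\dH(u,v) \leq K(v)+1$ for all $u,v \in V$. The vertex cover time bound then follows by taking $t := \max_v K(v) + 1$: at this time, the previous step applied simultaneously to every $v \in V$ guarantees $\tN_t(v) \geq 1$ for all $v$, so the walk has already visited every vertex.

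For the edge cover time, I set $t = 3\max_v K(v)$, so that for each fixed $v$ the one-sided bound yields
\[
    \tN_t(v) \;\geq\; \pi_v\,(t - K(v)) \;\geq\; 2\,\pi_v\,K(v).
\]
Inspecting the definition~\eqref{eq:K} of $K(v)$, the term $\tfrac12 \td(v)/\pi_v$ alone shows $K(v) \geq \td(v)/(2\pi_v)$, hence $\tN_t(v) \geq \td(v)$ for every $v$. This is exactly the combinatorial condition that the rotor at $v$ has advanced through a complete cycle: each position of the rotor sequence $\ts(v)$ has been used at least once to route a chip out of $v$, so every edge $\{v, w\}$ with $w \in \Gamma(v)$ has been traversed as $(v, w)$ before time $t$. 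Since this holds for every $v$, every undirected edge is covered, giving $\dEC(G) \leq 3\max_v K(v)$.

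The only delicate point is the bookkeeping that turns "$\tN_t(v) \geq \td(v)$" into "the rotor at $v$ has completed a full cycle," because the rotor only advances when the chip actually leaves $v$, so a single trailing visit at time $t$ could cost one rotor step. I would either absorb this by observing that the inequality $K(v) \geq \tfrac12 \td(v)/\pi_v$ is in fact strict whenever $\max_u \rH(u,v) > 0$ (which holds as soon as $|V| \geq 2$), or simply note that any such off-by-one can be absorbed into the multiplicative constant $3$, since one further step of the walk suffices to leave $v$ one more time. The remaining steps are routine substitutions into Theorem~\ref{thm:propp}.
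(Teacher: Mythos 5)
Your proposal is correct and follows the paper's own argument exactly: apply \thmref{propp} with $t=K(v)+1$ to force $\tN_t(v)>0$ for the hitting and vertex cover bounds, and with $t=3\max_v K(v)$ to force $\tN_t(v)\geq 2\pi_v K(v)>\td(v)$, so the rotor at $v$ completes a full cycle and every edge incident to $v$ is traversed. Your extra care about the strictness of $2\pi_v K(v)>\td(v)$ (via the $\max_u \rH(u,v)\geq 1$ term for $n\geq 2$) is exactly the detail the paper leaves implicit.
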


\noindent
One obvious question that arises from \thmref{propp} and \mycorref{propp}
is how to bound the value~$K(v)$. While it is clear that~$K(v)$ is
polynomial in~$n$ (provided that~$c_{\max}$ and~$\tkappa$ are polynomially bounded), it is not clear how to get more precise upper bounds.
A key tool to tackle the difference of hitting times in~$K(v)$
is the following elementary lemma,
where in case of a periodic walk
the sum is taken as a Ces{\'a}ro summation~\citep{Cesaro}.

\newcommand{\texttriple}{
    For all weighted random walks and all vertices $i,j,v\in V$,
    \[
      \tsum_{t=0}^{\infty} \big( \bP_{i,v}^{t} - \bP_{j,v}^{t} \big)
      =
      \pi_v \, ( \rH(j,v)-\rH(i,v) ).
    \]
}
\mylem{triple}{\texttriple}
\begin{proof}
    Let $\bZ$ be the fundamental matrix of~$\bP$ defined
    as
    $
      \bZ_{ij} := \sum_{t=0}^{\infty} \big( \bP_{i,j}^{t} - \pi_j \big).
    $
    It is known that for any pair of vertices~$i$ and~$v$,
    $
    \pi_v \, \rH(i,v) = Z_{v v} - Z_{i v}$
    (cf.~\cite[Ch.~2, Lem.~12]{AF02}). Hence by the convergence of~$\bP$,
    \begin{align*}
     \pi_v (\rH(j,v) - \rH(i,v))
     &= (Z_{v v} - Z_{j v}) - (  Z_{v v} - Z_{i v} ) \\
     &=  \tsum_{t=0}^{\infty} \big( \bP_{i,v}^{t} - \pi_v  \big)
        - \tsum_{t=0}^{\infty} \big( \bP_{j,v}^{t} - \pi_v  \big)
        = \tsum_{t=0}^{\infty} \big(	\bP_{i,v}^{t} - \bP_{j,v}^{t}	\big).
        \qedhere
    \end{align*}
\end{proof}

\subsection{Bounding $K(v)$ by the local divergence}

\noindent
To analyze weighted random walks, we use the notion of local divergence which
has been a fundamental quantity in the analysis of load balancing algorithms~\cite{RSW98,FS09}.
Moreover, the local divergence is considered to be of
independent interest (see \cite{RSW98} and further references therein).

\begin{defi} 
    \label{def:divergence}
    The local divergence of a weighted random walk is
    $
     \Psi(\bP) :=  \max_{v \in V} \Psi(\bP,v),
    $
    where $\Psi(\bP,v)$ is the local divergence
    w.r.t.\ to a vertex $v \in V$ defined as
    $
        \Psi(\bP,v) :=
        \tsum_{t=0}^{\infty}
        \tsum_{ \{ i,j \} \in E}
        \big| \bP_{i,v}^{t} - \bP_{j,v}^{t}  \big|.
    $
\end{defi}

Using \mycorref{propp} and \lemref{triple}, we get the following bound
on the hitting time of a deterministic walk.

\newcommand{\textdivergencecover}{
    For 
    all deterministic walks
    and all vertices~$v\in V$,
    \[
    K(v) \leq \max_{u \in V} \rH(u,v)
            + \frac{\tkappa\,c_{\max}}{\pi_v}\Psi(\bP,v)  + 2m\,\tkappa \,c_{\max}.
    \]
}
\mythm{divergencecover}{\textdivergencecover}
\begin{proof}
    To bound~$K(v)$ we first observe that by definition of~$\tkappa$ and~$c_{\max}$
    for all $u,v\in V$,
    \[
        \frac{\td(v)}{\pi_v}
        =
        \frac{\td(v)\,\sum_{i,j\in V}c(i,j)}{c(v)}
        \leq
        \frac{\tkappa\deg(v)\,2\sum_{\{i,j\}\in E}c(i,j)}{c(v)}
        \leq
        2\tkappa\tsum_{\{i,j\}\in E}c(i,j)
        \leq
        2m \,\tkappa\,c_{\max},
    \]
    \[
        \td(u) \bP_{u,v}
        \leq
        \tkappa\,\deg(u)\,\bP_{u,v}
        =
        \frac{\tkappa\,\deg(u)\,c(u,v) }{ c(u)}
        \leq
        \tkappa\,c(u,v)
        \leq
        \tkappa\,c_{\max}.
    \]
    Therefore,
    \begin{align*}
        K(v) &\leq \max_{u \in V} \rH(u,v)
               +m\,\tkappa\,c_{\max}
               +\frac{1}{2} \sum_{i,j \in V}
               \tkappa\,c_{\max} \, \big(| \rH(i,v) - \rH(j,v) | + 1 \big)  \\
       &\leq \max_{u \in V}  \rH(u,v)
            +2m\,\tkappa\,c_{\max}
            + \tkappa\,c_{\max} \sum_{ \{i,j\} \in E} |\rH(i,v) - \rH(j,v)| \\
       &\leq \max_{u \in V} \rH(u,v)
            + 2m\,\tkappa \,c_{\max}
            + \frac{\tkappa\,c_{\max}}{\pi_v}\Psi(\bP,v),
    \end{align*}
    where the last inequality follows from \lemref{triple} and \defref{divergence}.
\end{proof}

To see where the dependence on $\tkappa$ in \thmref{divergencecover}
comes from,
remember that our bounds hold for all configurations~$C\in\calC$ of the deterministic walk.
This is equivalent to bounds for a walk where an adversary chooses the
rotor sequences within the given setting.  Hence a larger~$\tkappa$ strengthens
the adversary as it gets more freedom of choice in the order
of the rotor sequence.
On the other hand, the~$c_{\max}$ measures how skewed the probability distribution
of the random walk can be.  With larger~$c_{\max}$, they get harder to approximate
deterministically.

Note that \thmref{divergencecover} is more general than just giving
an upper bound for hitting and cover times via \mycorref{propp}.
It can be useful in the other directions, too.
To give a specific example, we can apply the result of
\thmref{lower:hypercube} that $\dEC(G)=\Omega(n \log^2 n)$ for hypercubes
and $\max_{u,v} \rH(u,v) = \Oh(n)$ (cf.~\cite{Lovasz93random}) to
\thmref{divergencecover} and obtain a lower bound of $\Omega(n \log^2 n)$
on the local divergence of hypercubes.

\subsection{Bounding $K(v)$ for symmetric walks}

\noindent
To get meaningful bounds for the cover time, we restrict to unweighted random walks in the following.
In our notation this implies~$c_{\max}=1$ while~$\tkappa$ is still arbitrary.
First, we derive a tighter version of \thmref{divergencecover} for symmetric
random walks defined as follows.

\begin{defi}
    \label{def:symP}
    A symmetric random walk has transition probabilities
    $\bP'_{u,v} = \frac{1}{\Delta+1}$ if $\{u,v\} \in E$,
    $\bP'_{u,u} = 1 - \frac{1}{\Delta+1} \deg(u)$ and
    $\bP'_{u,v} = 0$ otherwise.
\end{defi}

These symmetric random walks occur frequently in the literature, e.g.,
for load balancing \cite{RSW98,FS09}
or for the cover time \cite{AKL08}.
The corresponding deterministic walk is defined as follows.

\begin{defi}
    \label{def:symdet}
    For an unweighted deterministic walk~$\bD$ with rotor sequences
    $\ts(\cdot)$ of length~$\td(\cdot)$, let the corresponding
    symmetric deterministic walk~$\bD'$ have for all~$u\in V$
    rotor sequences
    $\ts\mspace{1.5mu}'(u)$ of length $\td'(u):=\frac{\Delta+1}{\deg(u)}\td(u)$.
    with $\ts\mspace{1.5mu}'(u,i):=\ts(u,i)$ for $i\leq\td(u)$ and
    $\ts'(u,i):=u$ for $i>\td(u)$.
\end{defi}

\begin{wrapfigure}{r}{1.8cm}
  \vspace*{-2.1\baselineskip}
  \begin{center}
    \begin{tikzpicture}[->,shorten >=1pt,auto,thick,
    every node/.style={text centered,font=\normalsize}]
    \node (RW) at (0,.9) {$\bP$};
    \node (dRW) at (1.1,.9) {$\bD$};
    \node (sRW) at (0,0) {$\bP'$};
    \node (sdRW) at (1.1,0) {$\bD'$};
    \draw[thick] (node cs:name=dRW) -- (node cs:name=sdRW); 
    \draw[thick] (node cs:name=RW) -- (node cs:name=sRW); 
    \draw[thick] (node cs:name=sRW) -- (node cs:name=sdRW); 
    \draw[thick] (node cs:name=RW) -- (node cs:name=dRW);
    \end{tikzpicture}
  \end{center}
  \vspace*{-1\baselineskip}
\end{wrapfigure}

It is easy to verify that the definition ``commutes'', that is,
for a deterministic walk~$\bD$ corresponding to a random walk~$\bP$,
the corresponding deterministic walk~$\bD'$ corresponds to the
corresponding symmetric random walk~$\bP'$.
Let all primed variables
($\pi'_u$, $K'(v)$, $\kappa'$, $c'(u,v)$, $c_{\max}'$, $\rH'(u,v)$, $\dH'(u,v)$, $\rVC'(G)$, $\dVC'(G)$, $\rEC'(G)$, $\dEC'(G)$)
have their natural meaning for the symmetric random walk
and symmetric deterministic walk.

As~$\bP'$ is symmetric, the stationary distribution of~$\bP'$ is uniform,
i.e., $\pi'_i=1/n$ for all $i\in V$.
Note that the symmetric walk is in fact a weighted walk with
$c'(u,v)=1$ for $\{u,v\}\in E$,
$c'(u,u)=\Delta+1-\deg(u)$ for $u\in V$,
and
$c'(u,v)=0$ otherwise.
Using $c_{\max}'=\Delta+1-\delta$ in \thmref{divergencecover} is too coarse.
To get a better bound on~$K'(v)$ for symmetric walks, observe
that for all $v\in V$
    \begin{equation}
    \label{eq:symdpi}
        \frac{\td'(v)}{\pi'(v)}
        =
        n \td(v) \frac{\Delta+1}{\deg(v)}
        \leq
        n \,\tkappa \,(\Delta+1)
    \end{equation}
and for all $\{u,v\}\in E$
    \begin{equation}
        \label{eq:symdP}
        \td'(u) \bP'_{u,v}
        =
        \frac{\td(u)}{\deg(u)}
        \leq
        \tkappa.
    \end{equation}
Plugging this in the definition of~$K(v)$ as in \thmref{divergencecover} gives
the following theorem.

\begin{thm}\label{thm:symdivergence}
    For all symmetric 
    deterministic walks
    and all vertices $v\in V$,
    \[
    K'(v) =\Oh\bigg( \max_{u \in V} \rH'(u,v)
            + \frac{\tkappa}{\pi'(v)}\Psi(\bP',v)  + n\,\Delta\,\tkappa \bigg).
    \]
\end{thm}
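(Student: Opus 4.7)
The plan is to mirror the proof of \thmref{divergencecover} essentially verbatim, but replace the generic bounds $\td(v)/\pi_v \leq 2m\,\tkappa\,c_{\max}$ and $\td(u)\bP_{u,v} \leq \tkappa\,c_{\max}$ with the sharper, symmetry-specific bounds \eqref{eq:symdpi} and \eqref{eq:symdP}. Since the symmetric walk is weighted with $c'_{\max} = \Delta+1-\delta$, just substituting into \thmref{divergencecover} would yield a spurious factor of $\Delta$ in front of the divergence term; the point of the theorem is that exploiting the special structure of $\bP'$ (namely that $\td'(u)\bP'_{u,v}\leq\tkappa$ on \emph{edges}) avoids this loss.

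Starting from the definition \eqref{eq:K} of $K'(v)$, the first term $\max_u \rH'(u,v)$ is already in the desired form, and \eqref{eq:symdpi} immediately gives $\td'(v)/\pi'_v \leq n\,\tkappa\,(\Delta+1) = \O(n\Delta\tkappa)$. It remains to bound the double sum
\[
    S := \tsum_{i,j \in V} \td'(i)\,\bP'_{i,j}\,\bigl|\rH'(i,v) - \rH'(j,v) - 1\bigr|.
\]
I would split $S$ into diagonal terms ($i=j$) and off-diagonal terms, and apply the triangle inequality $|a-1|\leq|a|+1$ to the latter. For the diagonal terms, $|\rH'(i,v)-\rH'(i,v)-1|=1$, and a direct computation using $\td'(i)\bP'_{i,i} = \td(i)(\Delta+1-\deg(i))/\deg(i)$ yields
\[
    \tsum_{i\in V} \td'(i)\,\bP'_{i,i}
    \;\leq\; \tkappa \tsum_{i\in V}(\Delta+1-\deg(i))
    \;\leq\; \tkappa\,n\,(\Delta+1)
    \;=\; \O(n\Delta\tkappa).
\]

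For the off-diagonal terms, $\bP'_{i,j}$ is nonzero only when $\{i,j\}\in E$, so the sum collapses to one over edges and \eqref{eq:symdP} bounds each coefficient by $\tkappa$. The ``$+1$'' part contributes at most $\tkappa\cdot 2m = \O(n\Delta\tkappa)$, and the $|\rH'(i,v)-\rH'(j,v)|$ part is handled exactly as in \thmref{divergencecover}: pairing $(i,j)$ with $(j,i)$ gives a coefficient of at most $2\tkappa$ per undirected edge, and \lemref{triple} combined with \defref{divergence} converts
\[
    \tsum_{\{i,j\}\in E} |\rH'(i,v) - \rH'(j,v)|
    \;\leq\; \frac{1}{\pi'_v}\tsum_{\{i,j\}\in E}\tsum_{t=0}^\infty \bigl|\bP'^{\,t}_{i,v}-\bP'^{\,t}_{j,v}\bigr|
    \;=\; \frac{\Psi(\bP',v)}{\pi'_v}.
\]
Summing the three contributions gives $K'(v) = \O\bigl(\max_u \rH'(u,v) + n\Delta\tkappa + \tkappa\,\Psi(\bP',v)/\pi'_v\bigr)$, as claimed. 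There is no real obstacle beyond bookkeeping: the only mildly delicate point is isolating the self-loop terms so that the generic factor $c'_{\max}=\Theta(\Delta)$ never multiplies the divergence, which is exactly what \eqref{eq:symdP} is engineered to avoid.
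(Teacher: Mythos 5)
Your proposal is correct and follows exactly the route the paper intends: the paper's proof consists of the single sentence that one plugs \eqs{symdpi}{symdP} into the definition of $K(v)$ and repeats the argument of \thmref{divergencecover}, which is precisely the bookkeeping you carry out (including the correct isolation of the self-loop terms so that $c'_{\max}=\Theta(\Delta)$ never multiplies the divergence). No gaps.
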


\noindent
By definition,
$\dEC(G) \leq \dEC'(G)$
and
$\rH(u,v) \leq \rH'(u,v)$ for all $u,v\in V$.
The following lemma gives a natural reverse of the latter inequality.

\begin{lem}
    \label{lem:symH}
    For a random walk~$\bP$ and a symmetric random walk~$\bP'$ it holds for any pair of vertices $u,v$ that
    \[
       \rH'(u,v)
       \leq
       \frac{\Delta+1}{\delta} \, \rH(u,v).
    \]
\end{lem}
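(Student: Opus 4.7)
The plan is to obtain the inequality via a coupling that realizes the symmetric walk as the original walk slowed down by geometric waiting periods at each visited vertex. The first observation is that at any vertex $w$, the walk $\bP'$ stays put with probability $1-\deg(w)/(\Delta+1)$ and otherwise moves to a uniformly random neighbor; conditioned on \emph{not} taking a self-loop, the resulting transition is distributed exactly as $\bP_{w,\cdot}$. Consequently, the number of consecutive $\bP'$-steps spent at $w$ before a non-self-loop move is $\Geo(\deg(w)/(\Delta+1))$, with mean $(\Delta+1)/\deg(w)\leq(\Delta+1)/\delta$.

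Using this, I would construct an explicit coupling as follows. Sample a $\bP$-trajectory $X_0=u,X_1,X_2,\ldots$, and independently sample waiting times $T_0,T_1,\ldots$ where, conditional on the walk, $T_i\sim\Geo(\deg(X_i)/(\Delta+1))$. Build the trajectory $(Y_s)_{s\geq 0}$ by staying at $X_i$ for exactly $T_i$ steps and then transitioning to $X_{i+1}$. A routine check confirms that $(Y_s)$ has the law of a $\bP'$-walk started at $u$, and that the first time $Y$ visits $v$ equals $\tau':=\sum_{i=0}^{\tau-1}T_i$, where $\tau:=\min\{t\geq0: X_t=v\}$ is the $\bP$-hitting time.

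To finish, I would take expectations of $\tau'$. Because $\{i<\tau\}$ is measurable with respect to $X_0,\ldots,X_i$ and $\Ex{T_i\mid X_i}=(\Delta+1)/\deg(X_i)\leq(\Delta+1)/\delta$, applying monotone convergence together with the tower property (equivalently, Wald's identity for the non-negative increments $T_i$ at the stopping time $\tau$) yields
\[
\rH'(u,v)\;\leq\;\Ex{\tau'}\;\leq\;\tfrac{\Delta+1}{\delta}\,\Ex{\tau}\;=\;\tfrac{\Delta+1}{\delta}\,\rH(u,v),
\]
which is the claimed inequality.

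The only mildly delicate point is the measure-theoretic justification for interchanging expectation with the sum indexed by the stopping time $\tau$, but since all summands are non-negative this is handled by standard monotone convergence and poses no real obstacle; the construction of $\bP'$ is tailored precisely so that this subordination-by-geometric-times argument goes through cleanly.
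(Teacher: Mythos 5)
Your proof is correct and follows essentially the same route as the paper's: both realize the symmetric walk $\bP'$ as the original walk $\bP$ with geometric holding times at each visited vertex (coupling the non-loop steps) and conclude with a Wald-type argument. The only difference is that the paper first passes to an auxiliary chain $\bP''$ with uniform loop probability $1-\tfrac{\delta}{\Delta+1}$, so that the holding times become i.i.d.\ with mean $\tfrac{\Delta+1}{\delta}$ and classical Wald applies verbatim, whereas you keep the vertex-dependent holding times and bound each conditional expectation $\Ex{T_i \mid X_i} = \tfrac{\Delta+1}{\deg(X_i)}$ by $\tfrac{\Delta+1}{\delta}$ directly --- a slight streamlining that is justified by the monotone-convergence step you already flag.
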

\begin{proof}
    Let us consider the transition matrix~$\bP''$ with $\bP''_{u,u} = 1
    - \frac{\delta}{\Delta+1}$, $\bP''_{u,v} = \frac{\delta}{\Delta+1} \cdot
    \frac{1}{\deg(u)} $ if $\{u,v\} \in E$ and $\bP''_{v,v} = 0$ otherwise. Let
    $\rH''$ denote the hitting times of a random walk according to~$\bP''$.
    We couple the non-loop steps of a random walk according to~$\bP'$
    with the non-loop steps of a random walk according to~$\bP''$, as in both walks, a neighbor is chosen uniformly at random (conditioned on the event that the walk does not loop).

     Since all
    respective loop-probabilities satisfy $\bP'_{u,u} \leq \bP''_{u,u}$, it
    follows that for all vertices $u,v \in V$,
    $
      \rH'(u,v) \leq \rH''(u,v).
    $
    Our next aim is to relate $\tau''(u,v)$ to~$\tau(u,v)$, where $\tau''$ ($\tau$, resp.) is the first step when a random walk according to~$\bP''$ ($\bP$, resp.) starting at $u$ visits $v$. We can again couple the non-loop steps of both random walks, since every non-loop step of $\bP''$ chooses a uniform neighbor and so does $\bP$. Hence,
    $
     \rH''(u,v)  = \Ex{ \tau''(u,v) }
                 = \Exbig{ \sum_{i=1}^{\tau(u,v)} X_i},
    $
    where the $X_i$'s are independent, identically distributed geometric random variable with mean $\frac{\Delta+1}{\delta}$.
    Applying Wald's equation~\cite{Wald44}
     yields
    \begin{align*}
      \rH''(u,v) &= \Ex{\tau(u,v)} \cdot \Ex{X_1} = \rH(u,v) \cdot \frac{\Delta+1}{\delta},
    \end{align*}
    which  proves the claim.
\end{proof}

\subsection{Upper bound on the deterministic cover time depending on the expansion}

\noindent
We now derive an upper bound for~$\dEC(G)$ that depends on the expansion properties
of $G$.
Let~$\lambda_2(\bP)$ be the second-largest eigenvalue in absolute value of~$\bP$.

\newcommand{\textupperexpander}{
    For all graphs~$G$,
    $\dEC(G) = \Oh \big( \frac{\Delta}{\delta} \, \frac{n}{1-\lambda_2(\bP)} +
        n \, \tkappa\, \frac{\Delta}{\delta} \, \frac{\Delta \log n }{1-\lambda_2(\bP)} \big)
        $.
}
\mythm{upper:expander}{\textupperexpander}
\begin{proof}
    Let~$\bP$ and~$\bD$ be corresponding unweighted random and deterministic walks
    and~$\bP'$ and~$\bD'$ be defined as in \defrefs{symP}{symdet}.
    From the latter definition we get $\dEC(G)\leq \dEC'(G)$, as additional loops in the rotor sequence can only slow down the covering process.
    Hence it suffices to bound $\dEC'(G)$ with \thmref{symdivergence}.
    We will now upper bound all three summands involved in \thmref{symdivergence}.

    By two classical result for reversible, ergodic Markov chains (\cite[Chap.~3, Lem.~15]{AF02}
    and \cite[Chap.~3, Lem.~17]{AF02} of \citeauthor{AF02}),
    \begin{align*}
     \max_{u,v} \rH'(u,v)
     \leq
     2 \sum_{u \in V} \pi_u \cdot \rH'(u,v)
     \leq
     2 \frac{1-\pi_v}{\pi_v \cdot (1-\lambda_2(\bP'))}.
    \end{align*}
    As~$\bP'$ is symmetric, the stationary distribution of~$\bP'$ is uniform and therefore
    \begin{align}
        \max_{u,v\in V} \rH'(u,v) &\leq 2 \frac{n}{1-\lambda_2(\bP')}. \label{eq:maxhit}
    \end{align}

    \noindent
    In order to relate $\lambda_2(\bP)$ and $\lambda_2(\bP')$,
    we use the following ``direct comparison lemma'' for reversible Markov Chains~$\bP$ and~$\bP'$
    from \cite[Eq.~2.3]{DS93} (where in their notation, we plug in
    $a= \min_{i\in V} \frac{\pi_i}{\pi'_i} $ and $A=\max_{(i,j) \in E, i \neq j} \frac{\pi_i \bP_{i,j}}{\pi'_j \bP'_{i,j}}  $) to obtain that
    \begin{align}
        \label{eq:DS93}
        \frac{1-\lambda_2(\bP)}{1-\lambda_2(\bP')} \,
        \leq \,
        \frac{\max_{ (i,j) \in E, i \neq j} \frac{\pi_i \bP_{i,j}}{\pi'_i \bP'_{i,j} } }
           { \min_{i\in V} \frac{\pi_i}{\pi'_i} }.
    \end{align}
    We now determine the denominator and numerator of the right hand side of \eq{DS93}.
    As $\pi'_i=1/n$ and $\pi_i = \frac{\deg(i)}{2m}$ for all $i \in V$,
    $\min_i \frac{\pi_i}{\pi'_i} = \frac{\delta}{2m} n$.
    Moreover, for any edge $\{i,j\} \in E$,
    $\pi_i \bP_{i,j} = \frac{\deg(i)}{2m} \,\frac{1}{\deg(i)} = \frac{1}{2m}$ and
    $\pi'_i \bP'_{i,j} = \frac{1}{n} \, \frac{1}{\Delta+1}$ and therefore
    $
    \max_{(i,j) \in E, i \neq j} \frac{\pi_i \bP_{i,j}}{\pi'_i \bP'_{i,j} } =
    \frac{n (\Delta+1)}{2m}
    $.
    Plugging this  in \eq{DS93} yields
    \begin{align}
    \label{eq:standardzwei}
     \frac{1-\lambda_2(\bP)}{1-\lambda_2(\bP')}
     \leq \frac{ \frac{n(\Delta+1)}{2m} }{\frac{\delta}{2m} \, n}
      = \frac{\Delta+1}{\delta}.
    \end{align}

    \noindent
    From Theorem~4 of \citet{RSW98} we know the
    following upper bound on $\Psi(\bP')$ ,
    \begin{align}
      \Psi(\bP') &= \Oh \left( \frac{\Delta \log n}{1-\lambda_2(\bP')} \right). \label{eq:divergence}
    \end{align}
    Plugging all this in \thmref{symdivergence} and \mycorref{propp} gives
    \begin{align*}
    \dEC'(G)
     &= \O\bigg( \max_{u \in V} \rH'(u,v)
            + \frac{\tkappa}{\pi'(v)}\Psi(\bP',v)
            + n\,\Delta\,\tkappa
            \bigg)
          \\
     &= \O\bigg( \frac{n}{1-\lambda_2(\bP')}
            + n \, \frac{ \Delta \tkappa \log n}{1-\lambda_2(\bP')}
            \bigg)
         & (\text{by \eqs{maxhit}{divergence}})
          \\
     &= \O\bigg( \frac{\Delta}{\delta} \frac{n}{1-\lambda_2(\bP)}
            + n \, \frac{\Delta}{\delta} \frac{ \Delta \tkappa \log n}{1-\lambda_2(\bP)}
            \bigg).
         & (\text{by \eq{standardzwei}})
    \end{align*}
    As $\dEC(G)\leq \dEC'(G)$, this finishes the proof.
\end{proof}

Here, we call a graph with constant maximum degree an expander graph, if $1/(1-
\lambda_2(\bP))=\Oh(1)$ (equivalently, we have for all subsets $X \subseteq V, 1 \leq
|X| \leq n/2$, $|E(X,X^c)| = \Omega(|X|)$~(cf.~\cite[Prop.~6]{DS93})). Using
\thmref{upper:expander}, we immediately get the following upper bound on
$\dEC(G)$ for expanders.

\begin{cor}
\label{cor:upper:expander}
    For all expander graphs, $\dEC(G) = \Oh(\tkappa \, n \, \log n)$.
\end{cor}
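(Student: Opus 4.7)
The plan is to obtain this corollary as an immediate specialization of \thmref{upper:expander} to the expander setting. By definition of an expander graph as stated just before the corollary, we have constant maximum degree, i.e.\ $\Delta = \Oh(1)$, and bounded spectral gap, i.e.\ $1/(1-\lambda_2(\bP)) = \Oh(1)$. Since the graph is simple and connected, $\delta \geq 1$, so the ratio $\Delta/\delta$ is also $\Oh(1)$.

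The next step is to substitute these bounds directly into the right-hand side of \thmref{upper:expander}. The first summand $\frac{\Delta}{\delta}\cdot\frac{n}{1-\lambda_2(\bP)}$ simplifies to $\Oh(n)$, and the second summand $n\,\tkappa\,\frac{\Delta}{\delta}\,\frac{\Delta\log n}{1-\lambda_2(\bP)}$ simplifies to $\Oh(\tkappa\, n\log n)$. Since $\tkappa \geq 1$ (as noted in \secref{model:det}), the second summand dominates, giving the claimed bound $\dEC(G) = \Oh(\tkappa\, n\log n)$.

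There is essentially no obstacle, since all the work has been done in establishing \thmref{upper:expander}; the corollary is a bookkeeping calculation verifying that each factor $\Delta$, $1/\delta$, and $1/(1-\lambda_2(\bP))$ is $\Oh(1)$ on expanders and collecting the remaining $n$, $\log n$, and $\tkappa$ factors. The only point worth flagging is that one must use the same normalization of ``expander'' as in the statement preceding the corollary; if one adopted a different convention (e.g.\ bounded-degree with edge expansion only), the equivalence cited via \cite[Prop.~6]{DS93} would still convert it to the spectral form required by \thmref{upper:expander}.
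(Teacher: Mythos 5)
Your proposal is correct and matches the paper's approach exactly: the paper states that the corollary follows immediately from \thmref{upper:expander} using the definition of expander given just before it ($\Delta=\Oh(1)$ and $1/(1-\lambda_2(\bP))=\Oh(1)$), which is precisely your substitution. The bookkeeping with $\delta\geq 1$ and $\tkappa\geq 1$ is exactly what is needed and nothing is missing.
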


\subsection{Upper bound on the deterministic cover time by flows}

\noindent
We relate the edge cover time of the unweighted random walk to the optimal solution of the following flow problem.

\begin{defi}[cmp.~{\cite[Def.~1, Rem.~1]{MS06}}]\label{def:flow}
    Consider the flow problem where a distinguished source node~$s$ sends a flow amount of~$1$ to each other node in the graph.  Then $f_{s}(i,j)$ denotes the load transferred along edge $\{i,j\}$ (note $f_s(i,j)=-f_s(j,i)$) such that $\sum_{\{i,j\} \in E} f_s(i,j)^2$ is minimized.
 \end{defi}

\newcommand{\textflowcovernonregular}{
    For all graphs~$G$,
    \[
    \dEC(G) = \Oh\bigg(
    	   \frac{\Delta}{\delta} \max_{u,v \in V} \rH(u,v) +
    	   \Delta \,n\,\tkappa  +
    	   \tkappa \,
    	   \Delta \max_{s \in V}
    	   \sum_{\{i,j\} \in E}
    	   \left| f_s(i,j) \right|
    	   \bigg)
    \]
    where~$f_s$ is the flow with source~$s$ according to \defref{flow}.
}
\mythm{flowcovernonregular}{\textflowcovernonregular}
\begin{proof}
    Let $\bP$ and $\bD$ be corresponding unweighted random and deterministic walks
    and $\bP'$ and $\bD'$ be defined as in \defrefs{symP}{symdet}.

    By combining the equalities from \cite[Def.~1 \& Thm.~1]{MS06}
    (where we set the flow amount sent by $s$ to any other vertex to~$1$),
       \begin{equation}
        \left| f_s(i,j) \right|
        = \frac{n}{\Delta+1} \,
          \bigg| \sum_{t=0}^{\infty}
          \big({\bP'}_{i,s}^t - {\bP'}_{j,s}^t \big) \bigg|
        \qquad
        \text{for any edge $\{i,j\} \in E$.}
        \label{eq:meyerhenke}
    \end{equation}

    \noindent
    Now plugging \eqs{symdpi}{symdP} in the definition of $K'(v)$ from \eq{K} gives
    \begin{align*}
      K'(v)
    	&= \Oh\bigg(
    	   \max_{u,v \in V} \rH'(u,v)  +
    	   \Delta \,n\,\tkappa  +
    	   \tkappa
    	   \max_{v \in V}
    	   \sum_{\{i,j\} \in E} |\rH'(i,v) - \rH'(j,v)|
    	   \bigg)\\
    	&= \Oh\bigg(
    	   \max_{u,v \in V} \rH'(u,v) +
    	   \Delta \,n\,\tkappa  +
    	   \tkappa
    	   \max_{v \in V}
    	   n \sum_{\{i,j\} \in E}
    	   \left| \sum_{t=0}^{\infty} {\bP'}_{iv}^{t} - {\bP'}_{jv}^t \right|
    	   \bigg)
    	   & \text{(by \lemref{triple})}  \\
    	&= \Oh\bigg(
    	   \max_{u,v \in V} \rH'(u,v)  +
    	   \Delta \,n\,\tkappa  +
    	   \tkappa \,
    	   \Delta \max_{s \in V}
    	   \sum_{\{i,j\} \in E}
    	   \left| f_s(i,j) \right|
    	   \bigg)
    	   & \text{(by \eq{meyerhenke})}  \\
    	&= \Oh\bigg(
    	   \frac{\Delta}{\delta} \max_{u,v \in V} \rH(u,v) +
    	   \Delta \,n\,\tkappa  +
    	   \tkappa \,
    	   \Delta \max_{s \in V}
    	   \sum_{\{i,j\} \in E}
    	   \left| f_s(i,j) \right|
    	   \bigg)
    	   & \text{(by \lemref{symH})}
    \end{align*}

    \noindent
    With \mycorref{propp},
    $\dEC(G) \leq \dEC'(G)\leq 3\max_{v\in V}K'(v)$ finishes the proof.
\end{proof}

\subsection{Upper bounds on the deterministic cover time for common graphs}
\label{sec:upper:cover}

\noindent
We now demonstrate how to apply above general results
to obtain upper bounds for the edge cover time
of the deterministic walk for many common graphs.
As the general bounds \thmrefss{divergencecover}{upper:expander}{flowcovernonregular}
all have a linear dependency on~$\tkappa$, the following upper bounds
can be also stated depending on~$\tkappa$.  However, for clarity
we assume $\tkappa=\O(1)$ here.

\newcommand{\textuppercomplete}{
    For complete graphs, $\dEC(G)=\Oh(n^2)$.
}
\mythm{upper:complete}{\textuppercomplete}
\begin{proof}
    To bound the local divergence $\Psi(\bP')$, observe that
    for any $t \geq 1$, ${\bP'}_{i,j}^t = 1/n$ for every pair $i,j$.
    Hence we obtain
    \[
        \Psi(\bP')
        = \max_{v \in V} \Bigg\{ \sum_{t=0}^{\infty}
          \sum_{\{i,j\} \in E} \left| \bP_{v,i}^{t} - \bP_{v,j}^t \right| \Bigg \}
        = \max_{v \in V} \Bigg \{ \sum_{\{i,j\} \in E} \left| \bP_{v,i}^0 - \bP_{v,j}^0 \right|
          \Bigg \}
        = n-1.
    \]
    Plugging this into \thmref{symdivergence} yields the claim.
\end{proof}

\newcommand{\textuppertorus}{
    For $d$-dimensional torus graphs ($d\geq1$ constant), $\dEC(G)=\Oh(n^{1+1/d})$.
}
\mythm{upper:torus}{\textuppertorus}
\begin{proof}
    Also here, we apply~\thmref{divergencecover} and use
    the bound from \cite[Thm.~8]{RSW98} that $\Psi(\bP)=\Oh(n^{1/d})$.
    It is known that for $d=1$, $\max_{u,v \in V} \rH(u,v) = \Theta(n^2)$,
    $d=2$, $\max_{u,v \in V} \rH(u,v) = \Theta(n \log n)$ and
    for $d\geq3$, $\max_{u,v \in V} \rH(u,v) = \Theta(n)$~(e.g.,~\cite{CRRST97}).
    Hence the claim follows by \thmref{divergencecover}.
\end{proof}

\newcommand{\textupperhypercube}{
    For hypercubes, $\dEC(G) = \Oh(n \log^2 n)$.  
}
\mythm{upper:hypercube}{\textupperhypercube}
\begin{proof}
    To apply \thmref{flowcovernonregular}, we use the strong symmetry of the hypercube $H=H_{\log n}$. More precisely, we use the distance transitivity of the hypercube (cf.~\cite{Bi93}), that is,
    for all vertices $w,x,y,z\in V$
    with $\dist(w,x)=\dist(y,z)$ there is a permutation $\sigma \colon V \rightarrow V$
    with $\sigma(w)=y$, $\sigma(x)=z$ and for all $u,v \in V$,
    $
      \{ u,v \} \in E \, \Leftrightarrow \{ \sigma(u), \sigma(v) \} \in E.
    $

    We proceed to upper bound $\sum_{\{i,j\} \in E} |f_s(i,j)|$, where $f_s$ is defined as in\defref{flow}.
    As one might expect, for distance-transitive graphs the $\ell_2$-minimal flow
    is distributing the flow uniformly among all edges connecting pairs of vertices
    with a different distance to~$s$. More formally, \cite[Thm.~5]{MS06} showed
    that for any two vertices $i,j\in V$ with $i \in \Gamma^d(s)$ and $j \in \Gamma^{d+1}(s)$,
    \[
        \left| f_s(i,j) \right| =
        \frac{1}{|E(\Gamma^d(s),\Gamma^{d+1}(s))|} \sum_{\ell=d+1}^{\log n} |\Gamma^\ell(s)|.
    \]

    \noindent
    With $|\Gamma^\ell(s)|=\binom{\log n}{\ell}$ for all $0\leq\ell \leq \log n$,
    \begin{align*}
         \sum_{\{i,j\} \in E} \left| f_s(i,j) \right|
        =  \sum_{d=0}^{\log n-1} \sum_{\ell=d+1}^{\log n} \binom{\log n}{\ell}
        =  \sum_{\ell=1}^{\log n} \ell\,\binom{\log n}{\ell}
        =  \log n \, 2^{\log n-1}
        =   \log n \, (n/2).
    \end{align*}
   Moreover, it is a well-known result that on hypercubes,
   $\max_{u,v \in V} \rH(u,v) = \Oh(n)$~\cite[p.~372]{Al83}.
   Plugging this into \thmref{flowcovernonregular} yields the claim.
\end{proof}

\newcommand{\textuppertree}{
    For \kary trees ($k\geq2$ constant), $\dEC(G)=\Oh(n \log n)$.
}
\mythm{upper:tree}{\textuppertree}
\begin{proof}
    We examine a complete \kary tree ($k\geq2$)
    of depth $\log_k n-1\in\N$ (the root has depth $0$)
    where the number of nodes is $\sum_{i=0}^{\log_k n-1} k^{i} = n-1$.
    To apply \thmref{flowcovernonregular}, we observe that on a cycle-free graph
    an $\ell_2$-minimal flow $f$ is routed via shortest paths.
    Let us first assume that the distinguished node $s$ of \defref{flow}
    is the root and bound the corresponding optimal flow $f_1$.
    In this case, $f_1(x,y)=k^{\log_k n-i-1}-1$ for $x\in\Gamma^i(s)$ and $y\in\Gamma^{i+1}(s)$.
    Hence,
    \[
        \sum_{ \{i,j\} \in E} | f_1(i,j) |
        = \sum_{d=0}^{\log_k n-1} k^{d} (k^{\log_k n-d-1}-1)
        = \sum_{d=0}^{\log_k n-1} \left(\frac{n}{k} - k^d \right)
        \leq \frac{n}{k} \log_k n.
    \]
    Consider now the more general case, where the distinguished
    vertex $s$ is an arbitrary vertex.
    Here the optimal flow $f$ can be described as a superposition of a flow $f_1$
    and $f_2$, where $f_1$ sends a flow of $n$ tokens from $s$ to the root and
    $f_2$ sends $n-1$ tokens from the root to all other vertices. Clearly,
    $ \sum_{\{i,j\} \in E} | f_2(i,j) | \leq n \log_k n$
    as a flow amount of $n$ is routed over at most $\log_k n$ vertices.
    Therefore,
    \[
      \sum_{ \{i,j\} \in E} | f(i,j) |
      = \sum_{ \{i,j\} \in E} | f_1(i,j) + f_2(i,j) |
      \leq \sum_{ \{i,j\} \in E} | f_1(i,j) | + \sum_{\{i,j\} \in E} | f_2(i,j) |
      \leq \frac{k+1}{k}n \log_k n
    \]
    Moreover, we know from \cite[Proof of Corollary~9]{Zu92} that $\max_{u,v} \rH(u,v) = \Oh(n \log_k n)$. Hence applying \thmref{flowcovernonregular} yields the claim.
\end{proof}

\newcommand{\textupperlollipop}{
    For lollipop graphs, $\dEC(G)=\Oh(n^3)$.
}
\mythm{upper:lollipop}{\textupperlollipop}
\begin{proof}
We use the following strengthened version of \thmref{flowcovernonregular} (see last line of the proof of \thmref{flowcovernonregular}),
\begin{align}
 \dEC_{v} (G) &\leq \max_{u \in V} \rH'(u,v) + \Delta n + \Delta \sum_{ \{i,j\} \in E} | f_v(i,j) |,\label{eq:strength}
\end{align}
where $\dEC_{v}(G)$ refers to a random walk that starts at the vertex $v$. Note that to apply \eq{strength}, we have to consider a random walk with transition matrix $\bP' = \mathbf{I} - \frac{1}{n+1} \, \mathbf{L}$ and hitting times $\rH'(\cdot,\cdot)$.

We first argue why it is sufficient to consider the case where the deterministic walk starts at vertex $v=n/2$. First, if the deterministic walk starts at any other vertex in the complete graph, we know from our upper bound on the deterministic cover time on complete graphs (\thmref{upper:complete}) that after $\Oh(n^2)$ steps, the vertex $n/2$ is reached. Similarly, we know from \thmref{upper:torus} that if the random walk starts at any point of the path, it reaches the vertex $n/2$ within $\Oh(n^3)$ steps (note the extra factor of $\Oh(n)$, as in the corresponding deterministic walk model to~$\bP'$, each node on the path has $n/2+1$ loops).

So let us consider a random walk that starts at vertex $v=n/2$. To apply \eq{strength}, we have to bound $\sum_{\{i,j\} \in E} |f_v(i,j)|$ for a $\ell_2$-optimal flow that sends a flow amount of one from vertex $n/2$ to all other vertices (cf.~\defref{flow}).

Clearly, the $\ell_2$-optimal flow sends at each edge $\{i-1,i\} \in E$, $n/2 < i < n$ in the path a flow of $n-i$. Moreover, it assigns to each edge $(i,n/2)$ with $1 \leq i \leq n/2-1$ a flow of $1$. Hence,
\begin{align*}
 \sum_{\{i,j\} \in E} | f_v(i,j) | &= (n/2-1) \cdot 1 + \sum_{i=n/2+1}^{n} i = \Oh(n^2).
\end{align*}

Our final step is to prove $\max_{u,v \in V} \rH'(u,v)=\Oh(n^3)$ for the symmetric random walk. In fact, we shall prove that this holds for arbitrary graphs. Note that by the symmetry of the transition matrix, $\rH'(u,u)=1/(\pi_u)=n$. So take a shortest path $\mathcal{P}=(u_1=u,u_2,\ldots,u_{\ell}=v)$ of length $\ell$ between $u$ and $v$ in $G$. Note that each time the walk is at any vertex $u_i$ it moves to the vertex $u_{i+1}$ with probability $1/(\Delta+1)$. Hence if $\tau'(u,v)$ describes the random variable for the first hit of $v$ when starting from $u$, we have for any $1 \leq j \leq \ell-1$,
\begin{align*}
  \tau'(u_j,u_{j+1}) &= 1 + \sum_{i=1}^{\mathsf{Geo}(1/(\Delta+1))-1} X_i,
\end{align*}
where $X_i$ is the intermediate time between the $i$-th and $(i+1)$-th visit of $u_j$. Since all $X_i$ are independent and identically distributed random variables with expectation $n$, we can apply Walds equation~\cite{Wald44}
to get
\begin{align*}
 \rH'(u_j,u_{j+1}) &= \Ex{ \tau'(u_i,u_{i+1})} = 1 + \left(\Ex{\mathsf{Geo} \left(\frac{1}{\Delta+1} \right)} - 1 \right) \cdot \Ex{X_i} = 1 + \Delta \cdot n.
\end{align*}
Now using the triangle inequality, we finally get
\begin{align*}
  \rH'(u_1,u_{\ell}) &\leq \sum_{j=1}^{\ell-1} \rH'(u_j,u_{j+1}) \leq (\ell-1) \cdot (1 + \Delta \cdot n) = \Oh(n^3).
\end{align*}

\noindent
Plugging in our findings in \eq{strength}, the claim follows.
\end{proof}

The last theorem about the lollipop graph (a graph that consists of a clique with $n/2$ vertices connected to a path of length $n/2$) might look weak, but turns out to be tight as we will show in \thmref{lower:lollipop}.

\section{Lower Bounds on the Deterministic Cover Time}
\label{sec:lower}

\noindent
We first prove a general lower bound of $\Omega(m)$ on the deterministic cover time
for all graphs.
Afterwards, for all graphs examined in \secref{upper:cover} for which this general
bound is not tight (cycle, path, tree, torus, hypercube, expander)
we present stronger lower bounds which match their respective upper bounds.

\newcommand{\textlowerdense}{
    For all graphs, $\dVC(G)\geq m - \delta$.
}
\mythm{lower:dense}{\textlowerdense}
\begin{proof}
    Let $w$ be a vertex in $G$ with minimum degree $\delta$. Consider the graph $G
    \setminus \{w\}$ with each undirected edge $\{u,v\}$ replaced by a two directed
    edges $(u,v)$ and $(v,u)$. Then there is an Euler tour through $G \setminus
    \{w\}$. We now choose the rotor sequence
    $(\ts(u,1),\ts(u,2),\ldots,\ts(u,\deg(u)))$ of a vertex $u\in V\setminus\{w\}$
    according to the order the neighbors of $u$ are visited by the Euler tour.
    Then the deterministic walk takes the whole
    Euler tour through $G \setminus \{w\}$ of length $m-\deg(w) = m - \delta$
    before visiting $w$.
\end{proof}

As a telling example for a lower bound of the deterministic cover time
of a simple graph, let us
examine a rooted complete \kary tree ($k$ constant).
We choose the rotors to move clockwise and let the walk start at the root.
It is then easy to observe that a configuration where all rotors initially
point downwards towards their respective rightmost successor leads to a order of
explored vertices corresponding to a depth-first-search.
By definition of $\dVC(G)$ this only implies a trivial lower bound
for the deterministic cover time of $\Omega(n)$.
Analogously,
a configuration where all rotors initially point towards the root
leads to a order of the explored vertices corresponding to a breath-first-search.
However, an easy calculation also just gives a linear bound for this walk.

Now consider the following initial configuration: each vertex in the leftmost subtree of
the root is pointing upwards, each vertex in the other subtrees downwards and the
root vertex is pointing to the rightmost subtree.
Then every time the deterministic walk enters one of the subtrees
where the rotors are pointing downwards, it does a depth-first-search walk
of length~$\Theta(n)$.  When it reaches the root again, all inner vertices
of the subtree are visited~$k$ times and all rotors are pointing downwards
again.  On the other hand, when the deterministic walk enters the leftmost subtree
where all rotors are pointing upwards, it only visits one more level than it did
in the previous visit corresponding to a breath-first-search.  Overall,
the leftmost tree is visited $\log_k n$ times and all other vertices are
visited between two visits of the leftmost tree.  This gives a tight
lower bound of $\Omega(n \log n)$ and the following theorem.

\begin{thm}
\label{thm:lower:tree}
    For \kary trees ($k\geq2$ constant), $\dVC(G)=\Omega(n \log n)$.
\end{thm}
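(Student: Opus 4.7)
The plan is to formalize the construction already sketched in the paragraph preceding the theorem statement. I would fix the cyclic rotor sequence at each non-root vertex $u$ as $\ts(u)=(c_1(u),c_2(u),\ldots,c_k(u),p(u))$, listing the children of $u$ in left-to-right order followed by the parent; the root $r$ receives the sequence $(c_1(r),\ldots,c_k(r))$. The initial rotor pointer $\tr_0(\cdot)$ is chosen as follows: at the root, point to $c_k(r)$ (the root of the rightmost subtree $T_k$); at every vertex of the subtrees $T_2,\ldots,T_k$, point to the leftmost child $c_1$; at every vertex of the leftmost subtree $T_1$, point to the parent. The walk is started at $\tx_0=r$.

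The first main step is to show that an entry into any non-leftmost subtree $T_i$ ($2\le i\le k$) executes a complete depth-first traversal and restores the local rotor state. Concretely, I would prove by induction on subtree depth the following invariant: whenever the walk enters the root $v_i$ of $T_i$ and, inside $T_i$, every internal vertex has its rotor pointing to $c_1$, then after exactly $2(|T_i|-1)$ steps the walk returns to $r$, every vertex of $T_i$ has been visited, and the rotor state on $T_i$ is unchanged. The inductive step uses that the cyclic sequence $(c_1,\ldots,c_k,p)$, starting from $c_1$, forces the walker to visit all $k$ children in order before finally being sent back to the parent; by induction each child's subtree excursion returns to $v_i$ with rotors restored, so after $k$ such excursions the rotor at $v_i$ has advanced exactly $k+1$ positions and returned to $c_1$.

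The second main step is the analysis of the leftmost subtree $T_1$. I would show that the $j$-th entry of the walk into $T_1$ from $r$ (for $j=1,\ldots,\log_k n$) descends to a fresh vertex at depth exactly $j$ and then returns to $r$ in $O(j)$ steps, with the invariant preserved that after this return, vertices of $T_1$ at depths $<j$ still have their rotor pointing to the parent, while the leftmost vertex newly reached at depth $j$ now has its rotor pointing to $c_1$. The driving fact is that a rotor pointing at the parent immediately bounces the walker upward without any downward progress, so each entry into $T_1$ contributes exactly one additional level. Consequently, the deepest vertex of $T_1$, at depth $\log_k n$, cannot be reached before the walk enters $T_1$ at least $\log_k n$ times.

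Finally, the counts combine: between two consecutive entries to $T_1$, the walk must enter each of $T_2,\ldots,T_k$ exactly once, and each such excursion contributes $\Theta(n/k)=\Theta(n)$ steps by the first invariant. Hence the cover time is at least $(k-1)\cdot\Theta(n)\cdot\log_k n=\Omega(n\log n)$. The main obstacle will be the rotor bookkeeping in the DFS invariant: one must verify carefully that each internal vertex of a non-leftmost subtree is exited exactly $k+1$ times per excursion, with the $(k+1)$-st exit being the one that returns control to its parent, so that the cyclic rotor returns to its starting position; once that invariant is established, the iteration count and the $\Omega(n\log n)$ lower bound on $\dVC(G)$ follow immediately.
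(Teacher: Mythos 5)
Your construction and overall argument coincide with the paper's: the paper proves this theorem precisely via the informal paragraph preceding the statement (rotors in the leftmost subtree pointing to the parent, rotors in the other subtrees pointing downwards, root pointing to the rightmost subtree), so that each of the $\Theta(\log_k n)$ entries into the leftmost subtree is separated by $k-1$ full depth-first excursions of combined length $\Theta(n)$. Your treatment of the non-leftmost subtrees (each internal vertex entered and exited exactly $k+1$ times per excursion, rotor state restored, $\Theta(|T_i|)$ steps per excursion) is correct, as is the final count.

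The one genuine flaw is the invariant you state for the leftmost subtree $T_1$: it is false, and the induction as written breaks already at the second entry. After the first entry the root $v_1$ of $T_1$ has used its rotor once, so its pointer has advanced cyclically from $p$ to $c_1$; it does \emph{not} still point to the parent. Consequently the second entry descends into \emph{all} $k$ children of $v_1$ (each of which bounces back once), not just the leftmost one, and in general the $j$-th entry is a depth-$j$-limited DFS that visits all $k^{j-1}$ vertices of $T_1$ at depth $j$ in $\Theta(k^{j-1})$ steps --- not a single root-to-depth-$j$ path returning in $O(j)$ steps. The correct invariant is: after the $j$-th entry returns to $r$, every vertex of $T_1$ at depth $\le j$ points to its leftmost child and every vertex at depth $>j$ still points to its parent. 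This repaired invariant still gives exactly one new level of $T_1$ per entry, so the deepest vertex requires $\Theta(\log_k n)$ entries and the bound $(\log_k n)\cdot\Theta(n)=\Omega(n\log n)$ stands; you only need to substitute the correct invariant (and drop the $O(j)$ claim, which is never used in the lower bound) for the induction to go through.
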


A similar analysis gives the following
asymptotically tight lower bounds.

\newcommand{\textlowercycle}{
    For cycles, $\dVC(G)=\Omega(n^2)$.
}
\mythm{lower:cycle}{\textlowercycle}
\begin{proof}
    Let the $n+1$ vertices of an odd cycle be numbered consecutively from $-n/2$ to~$n/2$.
    Consider the initial configuration where every rotor is pointing towards
    the vertex's neighbor with a smaller number in absolute value and the rotor of $0$ point towards $1$.
    Assume that the walk starts from vertex $0$.
    It is easy to see that the sequence of visited vertices by the walk
    consists of $n/2$ phases where
    phase $i$ with $1\leq i <n/2$
    is of length $4i$ and visits $0,1,2,\ldots,i-1,i,i-1,\ldots,2,1,0,-1,-2,\ldots,-(i-1),-i,-(i-1),\ldots,-3,-2,-1$
    while the last phase visits $0,1,2,\ldots,n/2-1,n/2,n/2-1,\ldots,2,1,0,-1,-2,\ldots,-(n/2-1),-n/2$.
    Thus $(n^2+n)/2$ steps are required to cover all vertices.
    Note that the same argument gives a lower bound of $(n-1)^2+1$ steps
    for the path.
\end{proof}

\newcommand{\textlowerlollipop}{
    For lollipop graphs, $\dVC(G)=\Omega(n^3)$.
}
\mythm{lower:lollipop}{\textlowerlollipop}
\begin{proof}
    We number the vertices in the
    clique consecutively from $1$ to~$n/2$, and the vertices on the path
    consecutively from $n/2+1$ to~$n/2$. Further assume that the vertices $n/2$ and
    $n/2+1$ are connected.
    Consider the following configuration: each vertex on the path is pointing
    towards the vertex with a smaller number, and the rotor's permutation of the
    vertices in the complete graph are chosen such that a walk starting from the
    complete graph takes an Eulerian tour therein before escaping to the path. We know from
    the proof of \thmref{lower:cycle} that the root vertex is visited $n$ times,
    before the walk reaches the endpoint $n$. Since everytime the walk returns to
    the complete graph, it takes a complete Euler tour of length $\Theta(n^2)$ there,
    the theorem follows.
\end{proof}

More involved techniques are necessary for expanders, tori and hypercubes.

\newcommand{\textlowerexpander}{
    There are expander graphs with
    $\dVC(G)=\Omega( n \log n)$.
}
\mythm{lower:expander}{\textlowerexpander}

To prove \thmref{lower:expander} we first state the following property
of deterministic walks of \citet{Priezzhev1996}.

\begin{lem}[{\citet[p.~5080]{Priezzhev1996}}]
    \label{lem:Priezzhev}
    Between two successive visits of the same directed edge the unweighted deterministic walk
    visits no other directed edge twice.
\end{lem}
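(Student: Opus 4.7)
The plan is to argue by contradiction via a minimal counterexample. Suppose the lemma fails: some pair of consecutive traversals of a directed edge $(u,v)$, at steps $s_1<s_2$, has another directed edge $(a,b)\neq(u,v)$ used twice in the open interval $(s_1,s_2)$. Among all such violations, pick one minimizing the gap $s_2-s_1$.

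The essential ingredient I would invoke (twice) is the rotor-cycling property: because the rotor at $u$ occupies the same position immediately before steps $s_1$ and $s_2$, the walker must visit $u$ exactly $\deg(u)$ times in $[s_1,s_2)$, and so uses each of $u$'s outgoing edges once in this range --- the edge $(u,v)$ at step $s_1$ and the remaining $\deg(u)-1$ outgoing edges exactly once in $(s_1,s_2)$. In particular the repeated edge $(a,b)$ cannot leave $u$, so $a\neq u$. Picking consecutive traversals of $(a,b)$ inside $(s_1,s_2)$, say at $r_1<r_2$, the minimality of the counterexample forces no directed edge to be repeated in $(r_1,r_2)$, and applying the same rotor-cycling property at $a$ pins down the walk very tightly there: the walker visits $a$ exactly $\deg(a)-1$ times inside this interval, leaves $a$ along each of its other outgoing edges exactly once, and (by an in/out flow count combined with the single-use conclusion) enters $a$ along each of its $\deg(a)$ incoming edges exactly once in $(r_1,r_2]$.

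The main obstacle is then to convert this near-Eulerian bookkeeping into a contradiction. The plan is to push the analysis one level deeper by tracing rotor progress at every vertex $w\neq a$ visited inside the closed sub-excursion from $b$ back to $a$. If any such $w$'s rotor completes a full cycle in $(r_1,r_2)$, then $w$ repeats one of its outgoing edges there, directly contradicting the conclusion that every directed edge is used at most once inside this interval. Conversely, if no rotor at a vertex $w\neq a$ completes a cycle inside $(r_1,r_2)$, then each such $w$ has at most $\deg(w)-1$ outgoing uses there, and a flow-balance accounting (matching the $\deg(a)$ required entries into $a$ against the budget of distinct exits available at its potential predecessors) shows that the forced entry pattern at $a$ cannot be realized. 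Either horn of the dichotomy contradicts the existence of the minimal counterexample, proving the lemma; the delicate part will be to make the flow accounting in the second horn tight enough that it indeed excludes every consistent realization of the sub-excursion.
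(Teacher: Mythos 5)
Your opening move---rotor-cycling forces exactly $\deg(u)$ departures from $u$ between consecutive traversals of $(u,v)$, hence one extra arrival once you count the arrival immediately preceding the first traversal---is exactly the engine of the paper's proof. The gap is in which extremal counterexample you pick. The paper orders repeated edges by the time at which the \emph{second} traversal completes and takes the earliest one: if $(u,v)$ completes its repeat at step $j+1$, then all $\deg(u)+1$ arrivals at $u$ complete by step $j$, so by pigeonhole some directed edge into $u$ is traversed twice with both traversals completing before step $j+1$, an immediate contradiction with minimality. Your choice of minimizing the \emph{gap} $s_2-s_1$ loses this leverage: for the inner pair $(a,b)$ at $r_1<r_2$, only the $\deg(a)$ arrivals strictly between the two traversals are forced into the no-repeat window, and they match the $\deg(a)$ incoming edges of $a$ exactly, so pigeonhole yields nothing. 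If you additionally count the arrival completing at time $r_1$, the repeated incoming edge you obtain has one traversal outside the open interval $(r_1,r_2)$, so it neither contradicts the minimality-derived ``no repeats inside'' nor constitutes a smaller violation (no edge is repeated strictly between its two uses). Your own bookkeeping confirms this: everything you derive about $a$ is internally consistent.

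The proposed rescue does not close the gap. The first horn of your dichotomy rests on a false implication: a rotor at $w$ that completes exactly one full cycle makes exactly $\deg(w)$ departures and uses each outgoing edge once, with no repeat; only $\deg(w)+1$ departures force a repetition. The second horn cannot succeed as a pure flow-balance argument: in the closed sub-walk from $a$ back to $a$, in-degree equals out-degree at every vertex and each directed edge is used at most once, which is precisely the description of an Eulerian-type closed walk in a subgraph of the bidirected graph---such configurations exist combinatorially, so no accounting of entries against exits can exclude them. The contradiction must come from temporal ordering (some repeat completing \emph{earlier} than the one assumed extremal), not from conservation; replacing your extremal choice by ``earliest completion of a second traversal'' repairs the argument and reduces it to the paper's one-paragraph proof.
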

\begin{proof}
    Let the deterministic walk visit $\tx_0, \tx_1, \ldots, \tx_t, \tx_{t+1}$.
    We assume that the last edge $(\tx_t,\tx_{t+1})$ is equal to
    the first edge $(\tx_0,\tx_1)$ and this edge is not visited in between.
    Seeking a contradiction, we assume that there is an edge $(u,v)$
    which is the first edge that is visited twice in between, that is,
    there are (minimal) times $i,j$ with $0<i<j<t$ such that
    $(u,v)=(\tx_i,\tx_{i+1})=(\tx_j,\tx_{j+1})$.
    If the rotor of $u$ pointed twice towards $v$, the walk must have left $u$
    $\deg(u)+1$ often.  Hence the walk must also have entered $u$ that often.
    As there are only $\deg(u)$ edges going from any vertex to~$u$, one of
    these edges must have visited twice, too.  This contradicts our
    assumption that $(u,v)$ was the first edge visited twice.
\end{proof}

\begin{proof}[Proof of \thmref{lower:expander}]
    \newcommand{\ex}{\ensuremath{_{\text{ex}}}}
    \newcommand{\tree}{\ensuremath{_{\text{tr}}}}
    \newcommand{\matching}{\ensuremath{_{\text{ma}}}}
	We construct an expander $G=(V,E)$ with expansion constant $1/20$ and
	prove $\dVC(G)=\Omega( n \log n)$.
	$G$~consists of two subgraphs $G\ex$ and $G\tree$.
	$G\ex=(V\ex,E\ex)$ is a $d$-regular ($d \geq 10$ is a sufficiently large constant)
	expander graph with expansion constant $7/8$ and $n/2$ vertices.
	$G\tree=(V\tree,E\ex)$ is a tree with $n/2$ leaves, where the root has $d$ successors and all other nodes besides leaves have $d-1$ successors.
	Let $E\matching$ be the union of $d$ perfect matchings between the
	leaves of $V\tree$ and $V\ex$.

	Then $V=V\ex\cup V\tree$ and $E=E\ex\cup E\tree \cup E\matching$.
	
	\begin{figure}[!t]
	\begin{center}
	\scalebox{.8}{
	\begin{tikzpicture}[thick,auto,inner sep=0.75mm,domain=0:3,x=1cm,y=1cm]
\pgftransformxscale{0.55}
\pgftransformyscale{0.55}
\draw [rounded corners=15pt, fill=black!20] (4,9.1) -- (2.25,9.1) -- (-1,5.2) -- (9,5.2) -- (5.75,9.1) -- (4,9.1);
\draw [rounded corners=20pt, fill=black!20] (4,4) -- (-1,4) -- (-1,1.5) -- (9,1.5) -- (9,4) -- (4,4);
\node (1) at (4,8) [circle, draw, fill=black] {};
\node (2) at (1.5,7) [circle, draw, fill=black] {};
\node (3) at (4,7) [circle, draw, fill=black] {};
\node (4) at (6.5,7) [circle, draw, fill=black] {};
\node (5) at (1,6) [circle, draw, fill=black] {};
\node (6) at (2,6) [circle, draw, fill=black] {};
\node (7) at (3.5,6) [circle, draw, fill=black] {};
\node (8) at (4.5,6) [circle, draw, fill=black] {};
\node (9) at (6,6) [circle, draw, fill=black] {};
\node (10) at (7,6) [circle, draw, fill=black] {};
\node (15) at (1,3) [circle, draw, fill=black] {};
\node (16) at (2,3) [circle, draw, fill=black] {};
\node (17) at (3.5,3) [circle, draw, fill=black] {};
\node (18) at (4.5,3) [circle, draw, fill=black] {};
\node (19) at (6,3) [circle, draw, fill=black] {};
\node (20) at (7,3) [circle, draw, fill=black] {};
\draw (1) -- (2);
\draw (1) -- (3);
\draw (1) -- (4);
\draw (2) -- (5);
\draw (2) -- (6);
\draw (3) -- (7);
\draw (3) -- (8);
\draw (4) -- (9);
\draw (4) -- (10);
    \draw [dashed] (5) -- (16);
    \draw [dashed] (5) -- (19);
    \draw [dashed] (5) -- (20);
    \draw [dashed] (6) -- (17);
    \draw [dashed] (6) -- (15);
    \draw [dashed] (6) -- (20);
    \draw [dashed] (7) -- (18);
    \draw [dashed] (7) -- (15);
    \draw [dashed] (7) -- (17);
    \draw [dashed] (8) -- (19);
    \draw [dashed] (8) -- (20);
    \draw [dashed] (8) -- (16);
    \draw [dashed] (9) -- (18);
    \draw [dashed] (9) -- (19);
    \draw [dashed] (9) -- (17);
    \draw [dashed] (10) -- (15);
    \draw [dashed] (10) -- (16);
    \draw [dashed] (10) -- (18);

    \draw  (20) arc(-30:-150:3.5cm and 2.5cm);
    \draw  (20) arc(-30:-150:2cm and 1.375cm);
    \draw  (20) arc(-30:-150:1.5cm and 0.75cm);
    \draw  (19) arc(-30:-150:2.9cm and 1.5cm);
    \draw  (19) arc(-30:-150:1.5cm and 0.7cm);
    \draw  (19) arc(-30:-150:0.875cm and 0.5cm);
    \draw  (18) arc(-30:-150:1.5cm and 1cm);
    \draw  (17) arc(-30:-150:0.875cm and 0.5cm);
    \draw  (16) arc(-30:-150:0.55cm and 0.25cm);
    \draw (4.5,8.5) node[right] {$G\tree$};
    \draw (6.5,4.5) node[right] {$E\matching$};
    \draw (7.5,2.9) node[right] {$G\ex$};
    \end{tikzpicture}}
    \end{center}
    \caption{An illustration of the expander graph $G$ used in \thmref{lower:expander} for $d=3$ and $n=12$.}
\end{figure}
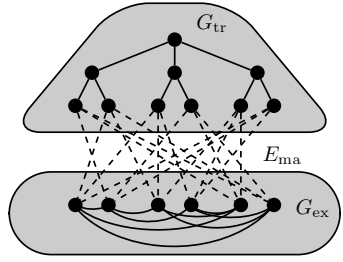

	We first prove that such a graph exists, prove some properties and
	that $G$ is an expander itself. At the end we prove the bound for $\dVC(G)$.

    We choose $G\ex$ as a $d$-regular Ramanujan graph with expansion constant at least $7/8$ and $n/2$ vertices
    i.e.,
    $
    |E(X,X^c)| \geq \frac{7}{8} d \,|X|,
	$
	for all $X \subseteq V\ex$ with $1 \leq |X| \leq n/4$.
	Such a graph exists since for random $d$-regular graphs, $\lambda_2=\O(d^{-1/2})$~\citep{DS91}
	and moreover
	$
	    |E(X,X^c)|/(d\,|X|)
	    \geq \sqrt{1-\lambda_2}
	    \geq \sqrt{1-\O(d^{-1/2})}$.
	
	Let us now consider a set $X \subseteq V\ex$ with $(1/4) n \leq |X| \leq (3/8) n$. Then,
	\begin{align}
	 |E(X,X^c)| &\geq \frac{7}{8} d \frac{1}{4} n - d \left( |X| - \frac{1}{4} n  \right )
	   \geq \frac{7}{32} d n - \frac{1}{8} dn
	   = \frac{3}{32} dn
	   \geq \frac{3}{32} d \frac{8}{3}\,|X|
	   = \frac{1}{4} d\,|X|.
	  \label{eq:strongexpander}
	\end{align}	
	\noindent
	To calculate $|V|$, observe that the total number of vertices in $G\tree$ is
	\[
	    |V\tree|
	    \leq \sum_{i=0}^{ \log_{d}(n/2)} d^i
	    = \frac{d(n/2)-1}{d-1}
	    \leq \frac{d}{d-1} \, \frac{n}{2}
	  	\leq \frac{9}{8} \, \frac{n}{2}
	  	= \frac{9}{16} n
	\]
	and therefore $G$ has $|V|\leq(17/16) n \leq (9/8) n$ vertices with
	$\delta(G)=d$ and $\Delta(G)=2d$.
	To see that $G$ is also an expander graph,
	take any subset $X \subseteq V$ with $1 \leq |X| \leq (9/16) n$.
	Let $X\tree := X \cap V\tree$ and $X\ex := X \cap V\ex$.
	\begin{enumerate}
	 \item Consider first the case where $|X\tree| \geq 4 |X\ex|$.
	 Observe that for any $X\tree$, $|E(X\tree,X\tree)| \leq |X\tree|-1$ as $G\tree$ is a tree.
	 Therefore,
	 \begin{align*}
	   |E(X,X^c)|
	   &\geq |E(X\tree,X^c)|
	   =|E(X\tree,V)| - |E(X\tree,X\tree)| - |E(X\tree,X\ex)|\\
	   &\geq d \, |X\tree| - |X\tree| - d \, |X\ex|
       \geq d |X\tree| - |X\tree| - \frac{d}{4} \, |X\tree| \\
	   &\geq \left(\frac{3}{4} d - 1 \right) \, |X\tree|
	   \geq \left(\frac{3}{4} d - 1 \right) \frac{4}{5} \, |X|
	   \geq \frac{13}{25} d \, |X|.
	 \end{align*}
	\item Assume now that $|X\tree| \leq 4 \, |X\ex|$. If $|X\ex| \leq (3/8) n$, then \eq{strongexpander} implies that
	\begin{align*}
	 |E(X,X^c)|
	 \geq  |E(X\ex,V\ex\setminus X)|
	 =  |E(X\ex,V\ex\setminus X\ex)|
	 \geq \frac{1}{4} d \,|X\ex|
	 \geq \frac{1}{20} d \,|X|.
	\end{align*}
	On the other hand, if $X\ex \geq (3/8) n$, it follows that $X\tree \leq (3/16) n$. Since each vertex in $X\ex$ has $d$ edges to~$V\tree$, we have
	\begin{align*}
	  |E(X,X^c)|
	  &\geq |E(X\ex,X^c)|
	  \geq |E(X\ex,X\ex^c)| - |E(X\ex,X\tree)|
	  \geq \frac{3}{8} d n  - \frac{3}{16} (d+1) n\\	
	  &= \frac{3}{16}\,(d-1)\,n
	  \geq \frac{3}{16}\,(d-1)\,\frac{16}{9}|X|
	  = \frac{1}{3}\,(d-1)\,|X|
	  \geq \frac{3}{10} \, d \, |X|.
	\end{align*}
	\end{enumerate}
	Hence we conclude that the graph $G$ is an expander graph with expansion constant $1/20$.
	
    We are now ready to define the rotors.
    As in the proof of \thmref{lower:dense}, choose an Euler tour of the directed
    graph $G\ex$ and set the rotors of $V\ex$ and the initial position
    such that the deterministic walk on $G$ first performs an Euler tour on $G\ex$
    before visiting any node from $V\tree$.  For vertices from $V\tree$ we choose
    the rotor sequence similar to the proof of \thmref{lower:tree} such that
    the direction of the root is always the last one in the sequence.
    Let $u_i\in V\tree$ be the first node in level~$i$ with
    $0<i<\log_d(n/2)-2$ of the tree $G\tree$ which is reached.
    Let this happen at time $t_i$ from a node $u_{i+1}$ in level $i+1$.
    Let $t_{i+1}$ be the first time $u_{i+1}$ is visited.
    By choice of the rotor sequence, only at the $(d+1)$-th visit
    to~$u_{i+1}$ its rotor can point upwards to~$u_i$.  As $u_{i+1}$
    has only $d$ children, one child $u_{i+2}$ must be visited twice
    between times $t_{i+1}$ and $t_i$.  Hence also the directed
    edge $(u_{i+2},u_{i+1})$ is visited twice in this time interval.
    Assume there was an edge $e\in E\ex$ which was not visited
    in this time interval.  We know that this edge $e$ is visited before
    time $t_{i+1}$ by the Euler tour and that it is visited after time
    $t_i$ as the graph is strongly connected and the deterministic walk
    eventually visits all edges arbitrarily often.  \lemref{Priezzhev}
    implies that then $e$ must also be visited
    between times $t_{i+1}$ and $t_i$.  Overall, between every new level of
    $V\tree$ which is explored, the deterministic walk has to visit
    all edges $E\ex$.  Hence it takes $\Omega(n\log n)$ steps to visits
    all vertices of $G$.
\end{proof}

\newcommand{\textlowertorus}{
    For two-dimensional torus graphs, $\dVC(G)=\Omega(n^{3/2})$.
}
\mythm{lower:torus}{\textlowertorus}
\begin{proof}
    \newcommand{\Cout}{\textsf{out}\xspace}
    \newcommand{\Cin}{\textsf{in}\xspace}
    \newcommand{\Ccycle}{\textsf{cycle}\xspace}

    \newcommand{\Fstyle}[1]{{\underline{#1}}\xspace}
    \newcommand{\Fout}{\Fstyle{\Cout}\xspace}
    \newcommand{\Fin}{\Fstyle{\Cin}\xspace}
    \newcommand{\Fcycle}{\Fstyle{\Ccycle}\xspace}

    \newcommand{\mydown}{\begin{tikzpicture}[scale=.5,baseline=0pt]
        \filldraw[color=white] (0.0,0.0) rectangle (0.57,0.57);
        \filldraw[color=black] (0.25,0.5) circle (.07cm);
        \draw[->,thick] (0.25,0.5) -- (0.25,0);
        \end{tikzpicture}\xspace}
    \newcommand{\myleft}{\begin{tikzpicture}[scale=.5,baseline=0pt]
        \filldraw[color=white] (0,0) rectangle (0.57,0.57);
        \filldraw[color=black] (0.5,0.25) circle (.07cm);
        \draw[->,thick] (0.5,0.25) -- (0,0.25);
        \end{tikzpicture}\xspace}
    \newcommand{\myup}{\begin{tikzpicture}[scale=.5,baseline=0pt]
        \filldraw[color=white] (-0.07,-0.07) rectangle (0.5,0.5);
        \filldraw[color=black] (0.25,0) circle (.07cm);
        \draw[->,thick] (0.25,0) -- (0.25,0.5);
        \end{tikzpicture}\xspace}
    \newcommand{\myright}{\begin{tikzpicture}[scale=.5,baseline=0pt]
        \filldraw[color=white] (-0.07,-0.07) rectangle (0.5,0.5);
        \filldraw[color=black] (0,0.25) circle (.07cm);
        \draw[->,thick] (0,0.25) -- (0.5,0.25);
        \end{tikzpicture}\xspace}

    Consider a two-dimensional $\sqrt{n} \times \sqrt{n}$ torus.
    For simplicity we assume that $\sqrt{n}$ is an odd integer and
    represent the vertices by two coordinates $(x,y)$ with $-L \leq
    x,y \leq L$ with $L:=(\sqrt{n}-1)/2$.

    \newcommand{\myspace}{\,\,}  
    \begin{table}[tb]
    \begin{center}
    \footnotesize
    \begin{tabular}{|cc|cccccc|c|}
    \hline\hline
    \multirow{2}{*}{\bf Phase} &
    \multirow{2}{*}{\bf Steps} &
    \multicolumn{6}{|c|}{\bf Situation at the end of the respective phase} &
    \multirow{2}{*}{\bf Corresponding Figure}\\
    &&
    \myspace $\bm{C_1}$ \myspace\myspace &
    \myspace $\bm{C_2}$ \myspace\myspace &
    \myspace $\bm{C_3}$ \myspace\myspace &
    \myspace $\bm{C_4}$ \myspace\myspace &
    \myspace $\bm{C_5}$ \myspace\myspace &
    \myspace $\bm{C_6}$ \myspace\myspace &
    \\
    \hline\hline
     1 &     1    & \Fin    & \Cin    & \Cin & \Cin & \Cin & \Cin & \figref{torus:1} \\
     2 &   2--9   & \Fcycle & \Cin    & \Cin & \Cin & \Cin & \Cin & \figref{torus:9} \\
     3 &  10--18  & \Cout   & \Fin    & \Cin & \Cin & \Cin & \Cin  & \figref{torus:18} \\
     4 &  19--49  & \Fin    & \Ccycle & \Cin & \Cin & \Cin & \Cin  & \figref{torus:49} \\
     5 &  50--57  & \Fcycle & \Ccycle & \Cin & \Cin & \Cin & \Cin  & \figref{torus:57} \\
     6 &  58--66  & \Cout   & \Fcycle & \Cin & \Cin & \Cin & \Cin  & \figref{torus:66} \\
     7 &  67--83  & \Cout   & \Cout   & \Fin & \Cin & \Cin & \Cin  & \figref{torus:83} \\
     8 &  84--138 & \Cout   & \Fin    & \Ccycle & \Cin & \Cin & \Cin  & \figref{torus:138} \\
     9 & 139--169 & \Fin    & \Ccycle & \Ccycle & \Cin & \Cin & \Cin  & \figref{torus:169} \\
    10 & 170--177 & \Fcycle & \Ccycle & \Ccycle & \Cin & \Cin & \Cin  & \figref{torus:177} \\
    11 & 178--186 & \Cout   & \Fcycle & \Ccycle & \Cin & \Cin & \Cin  & \figref{torus:186} \\
    12 & 187--203 & \Cout   & \Cout   & \Fcycle & \Cin & \Cin & \Cin  & \figref{torus:203} \\
    \hline\hline
    \end{tabular}
    \vspace*{0.5\baselineskip}
    \caption{First twelve phases of a deterministic walk on the two-dimensional torus.
    A visual description of the phases on the $7\times 7$ torus is given in \figref{lower:torus}.
    The different states are defined in the proof of \thmref{lower:torus}.
    Underlined states indicate the last position of the deterministic walk.
    }
    \vspace*{-1.5\baselineskip}
    \label{tab:lower:torus}
    \end{center}
    \end{table}

    Let all rotor sequences be ordered clockwise (that is, \myup,\myright,\mydown,\myleft,\ldots)
    and start with a rotor in the direction of the origin $(0,0)$.
    More precisely, let the initial rotor direction
    at vertex $(x,y)$ be

    \noindent
    \begin{tabular}{l@{\qquad\qquad}l}
    \begin{minipage}[c]{0.6944\linewidth}
    \begin{enumerate}
        \setlength{\itemsep}{0pt}
        \setlength{\parskip}{0pt}
        \item \myup if $y \leq -1$ and $y \leq -x$ and $y < x$ or $(x,y)=(0,0)$,
        \item \myright if $x \leq -1$ and $-x > y$ and $x \leq y$,
        \item \mydown if $y \geq 1$ and $y \geq -x$ and $y > x$,
        \item \myleft if $x \geq 1$ and $-x < y$ and $x \geq y$.
    \end{enumerate}
    \end{minipage}
    &
    \begin{tikzpicture}[scale=.39,baseline=0pt]
\filldraw[color=white] (-3.1,-3.1) rectangle (3.1,3.1);
\draw[step=1cm,thin,color=black!30] (-2.7,-2.7) grid (2.7,2.7);
\draw[->,thick] (-2,-2) -- (-1.6,-2);
\draw[->,thick] (-2,-1) -- (-1.6,-1);
\draw[->,thick] (-2,0) -- (-1.6,0);
\draw[->,thick] (-2,1) -- (-1.6,1);
\draw[->,thick] (-2,2) -- (-2,1.6);
\%draw[->,thick] (-2,3) -- (-2,2.6);
\draw[->,thick] (-1,-2) -- (-1,-1.6);
\draw[->,thick] (-1,-1) -- (-0.6,-1);
\draw[->,thick] (-1,0) -- (-0.6,0);
\draw[->,thick] (-1,1) -- (-1,0.6);
\draw[->,thick] (-1,2) -- (-1,1.6);
\%draw[->,thick] (-1,3) -- (-1,2.6);
\draw[->,thick] (0,-2) -- (0,-1.6);
\draw[->,thick] (0,-1) -- (0,-0.6);
\draw[->,thick] (0,0) -- (0,0.4);
\draw[->,thick] (0,1) -- (0,0.6);
\draw[->,thick] (0,2) -- (0,1.6);
\%draw[->,thick] (0,3) -- (0,2.6);
\draw[->,thick] (1,-2) -- (1,-1.6);
\draw[->,thick] (1,-1) -- (1,-0.6);
\draw[->,thick] (1,0) -- (0.6,0);
\draw[->,thick] (1,1) -- (0.6,1);
\draw[->,thick] (1,2) -- (1,1.6);
\%draw[->,thick] (2,-3) -- (2,-2.6);
\draw[->,thick] (2,-2) -- (2,-1.6);
\draw[->,thick] (2,-1) -- (1.6,-1);
\draw[->,thick] (2,0) -- (1.6,0);
\draw[->,thick] (2,1) -- (1.6,1);
\draw[->,thick] (2,2) -- (1.6,2);
\foreach \x in {-2,...,2} \foreach \y in {-2,...,2} {
\filldraw[color=black] (\x,\y) circle (.07cm);
}
\filldraw[color=blue] (0,0) circle (.11cm);
\end{tikzpicture}
    \\
    \end{tabular}

    \begin{figure}[!p]
        \input{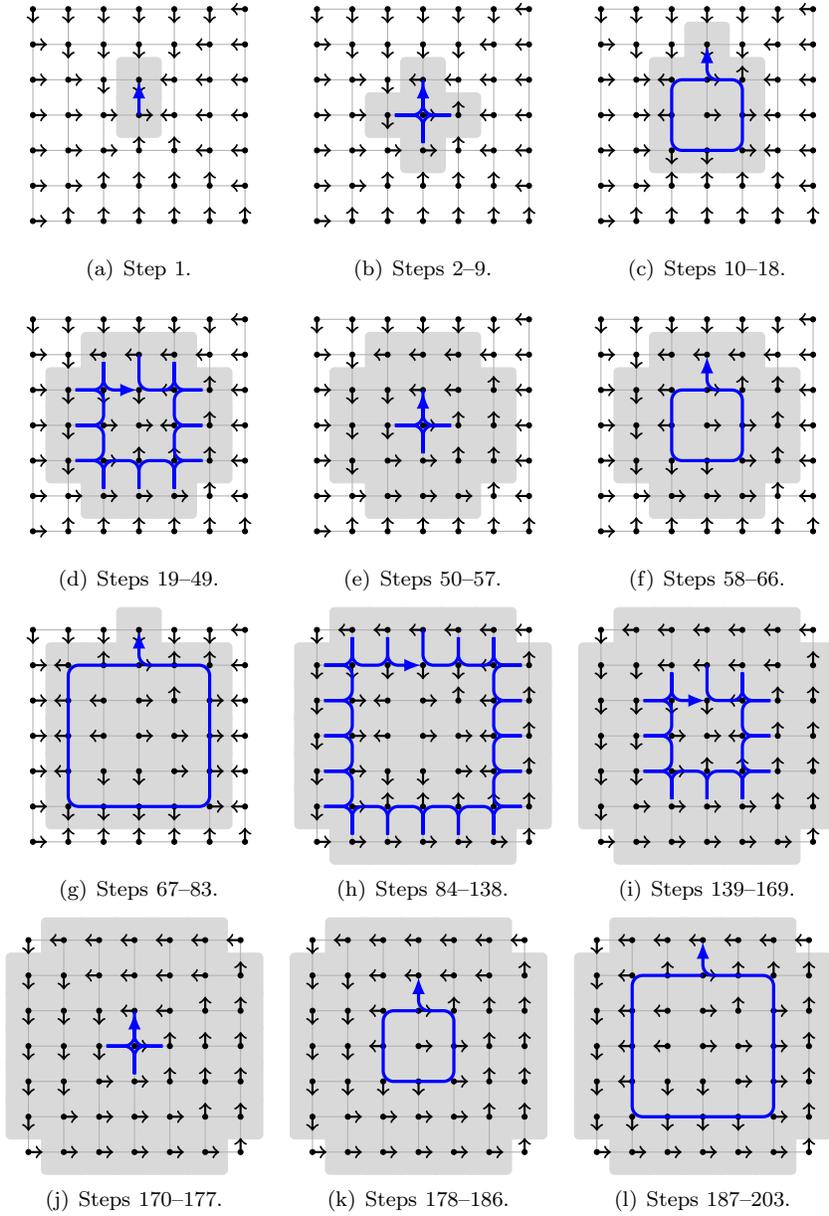}

        \caption{The first eleven phases of the deterministic walk
            on the two-dimensional $7 \times 7$ torus.  All rotors are initially
            pointing towards the origin.
            In each phase the deterministic walk is shown as a blue arrow.
            The depicted rotors correspond to the rotor directions at the end
            of the respective phase.
            The gray shaded area marks all covered vertices at this time.
            }
        \label{fig:lower:torus}
    \end{figure}

    \noindent
    We will start the random walk from the origin $(0,0)$.
    Denote by $C_i:=\{ (x,y)\in V \colon \max\{x,y\}=i\}$
    the boundary of the square defined by the corners
    $(i,i)$, $(i,-i)$,$(-i,-i)$,$(-i,i)$.
    For each square $C_i$, $1 \leq i \leq L$,
    we define three different states called \Cin, \Ccycle and \Cout:

    \noindent
    \begin{tabular}{r@{\ \:}l@{\ \ }l@{\qquad}l}
    (i) & $C_i$ is \Cin, iff &
    $\rho(x,y) =
    \begin{cases}
     \mydown & \mbox{for $y=i$ and $-i \leq x < i$}, \\
     \myleft & \mbox{for $x=i$ and $-i < y \leq i$}, \\
     \myup & \mbox{for $y=-i$ and $-i < x \leq i$}, \\
     \myright & \mbox {for $x=-i$ and $-i \leq y < i$ }. \\
    \end{cases}$ &
    \begin{tikzpicture}[scale=.39,baseline=0pt]
\foreach \x in {-2,...,2} \foreach \y in {-2,...,2} {
\filldraw[color=black!30] (\x,\y) circle (.07cm);
}
\draw[step=1cm,thin,color=black!30] (-2.7,-2.7) grid (2.7,2.7);
\filldraw[color=black] (-2,-2) circle (.07cm);
\filldraw[color=black] (-2,-1) circle (.07cm);
\filldraw[color=black] (-2,0) circle (.07cm);
\filldraw[color=black] (-2,1) circle (.07cm);
\filldraw[color=black] (-2,2) circle (.07cm);
\filldraw[color=black] (-1,-2) circle (.07cm);
\filldraw[color=black] (-1,2) circle (.07cm);
\filldraw[color=black] (0,-2) circle (.07cm);
\filldraw[color=black] (0,2) circle (.07cm);
\filldraw[color=black] (1,-2) circle (.07cm);
\filldraw[color=black] (1,2) circle (.07cm);
\filldraw[color=black] (2,-2) circle (.07cm);
\filldraw[color=black] (2,-1) circle (.07cm);
\filldraw[color=black] (2,0) circle (.07cm);
\filldraw[color=black] (2,1) circle (.07cm);
\filldraw[color=black] (2,2) circle (.07cm);
\draw[->,thick] (-2,-2) -- (-1.4,-2);
\draw[->,thick] (-2,-1) -- (-1.4,-1);
\draw[->,thick] (-2,0) -- (-1.4,0);
\draw[->,thick] (-2,1) -- (-1.4,1);
\draw[->,thick] (-2,2) -- (-2,1.4);
\draw[->,thick] (-1,-2) -- (-1,-1.4);
\draw[->,thick] (-1,2) -- (-1,1.4);
\draw[->,thick] (0,-2) -- (0,-1.4);
\draw[->,thick] (0,2) -- (0,1.4);
\draw[->,thick] (1,-2) -- (1,-1.4);
\draw[->,thick] (1,2) -- (1,1.4);
\draw[->,thick] (2,-2) -- (2,-1.4);
\draw[->,thick] (2,-1) -- (1.4,-1);
\draw[->,thick] (2,0) -- (1.4,0);
\draw[->,thick] (2,1) -- (1.4,1);
\draw[->,thick] (2,2) -- (1.4,2);
\filldraw[color=blue] (0,0) circle (.11cm);
\end{tikzpicture}

    \\[1cm]
    (ii) & $C_i$ is \Ccycle, iff &
    $\rho(x,y) =
    \begin{cases}
     \mydown & \mbox{for $x=-i$ and $-i < y \leq i$}, \\
     \myleft & \mbox{for $y=i$ and $-i < x \leq i$}, \\
     \myup & \mbox{for $x=i$ and $-i \leq y < i$}, \\
     \myright & \mbox {for $y=-i$ and $-i \leq x < i$ }. \\
    \end{cases}$ &
    \begin{tikzpicture}[scale=.39,baseline=0pt]
\foreach \x in {-2,...,2} \foreach \y in {-2,...,2} {
\filldraw[color=black!30] (\x,\y) circle (.07cm);
}
\draw[step=1cm,thin,color=black!30] (-2.7,-2.7) grid (2.7,2.7);
\filldraw[color=black] (-2,-2) circle (.07cm);
\filldraw[color=black] (-2,-1) circle (.07cm);
\filldraw[color=black] (-2,0) circle (.07cm);
\filldraw[color=black] (-2,1) circle (.07cm);
\filldraw[color=black] (-2,2) circle (.07cm);
\filldraw[color=black] (-1,-2) circle (.07cm);
\filldraw[color=black] (-1,2) circle (.07cm);
\filldraw[color=black] (0,-2) circle (.07cm);
\filldraw[color=black] (0,2) circle (.07cm);
\filldraw[color=black] (1,-2) circle (.07cm);
\filldraw[color=black] (1,2) circle (.07cm);
\filldraw[color=black] (2,-2) circle (.07cm);
\filldraw[color=black] (2,-1) circle (.07cm);
\filldraw[color=black] (2,0) circle (.07cm);
\filldraw[color=black] (2,1) circle (.07cm);
\filldraw[color=black] (2,2) circle (.07cm);
\draw[->,thick] (-2,-2) -- (-1.4,-2);
\draw[->,thick] (-2,-1) -- (-2,-1.6);
\draw[->,thick] (-2,0) -- (-2,-0.6);
\draw[->,thick] (-2,1) -- (-2,0.4);
\draw[->,thick] (-2,2) -- (-2,1.4);
\draw[->,thick] (-1,-2) -- (-0.4,-2);
\draw[->,thick] (-1,2) -- (-1.6,2);
\draw[->,thick] (0,-2) -- (0.6,-2);
\draw[->,thick] (0,2) -- (-0.6,2);
\draw[->,thick] (1,-2) -- (1.6,-2);
\draw[->,thick] (1,2) -- (0.4,2);
\draw[->,thick] (2,-2) -- (2,-1.4);
\draw[->,thick] (2,-1) -- (2,-0.4);
\draw[->,thick] (2,0) -- (2,0.6);
\draw[->,thick] (2,1) -- (2,1.6);
\draw[->,thick] (2,2) -- (1.4,2);
\filldraw[color=blue] (0,0) circle (.11cm);
\end{tikzpicture}

    \\[1cm]
    (iii) & $C_i$ is \Cout, iff &
    $\rho(x,y) =
    \begin{cases}
     \myleft & \mbox{for $x=-i$ and $-i < y \leq i$}, \\
     \myup & \mbox{for $y=i$ and $-i < x \leq i$ and $x\neq0$}, \\
     \multirow{2}{*}{$\myright$} & \mbox{for $x=i$ and $-i \leq y < i$} \\
              & \mbox{\phantom{for }or $y=i$ and $x=0$}, \\
     \mydown & \mbox {for $y=-i$ and $-i \leq x < i$ }. \\
    \end{cases}$
    &
    \begin{tikzpicture}[scale=.39,baseline=0pt]
\foreach \x in {-2,...,2} \foreach \y in {-2,...,2} {
\filldraw[color=black!30] (\x,\y) circle (.07cm);
}
\draw[step=1cm,thin,color=black!30] (-2.7,-2.7) grid (2.7,2.7);
\filldraw[color=black] (-2,-2) circle (.07cm);
\filldraw[color=black] (-2,-1) circle (.07cm);
\filldraw[color=black] (-2,0) circle (.07cm);
\filldraw[color=black] (-2,1) circle (.07cm);
\filldraw[color=black] (-2,2) circle (.07cm);
\filldraw[color=black] (-1,-2) circle (.07cm);
\filldraw[color=black] (-1,2) circle (.07cm);
\filldraw[color=black] (0,-2) circle (.07cm);
\filldraw[color=black] (0,2) circle (.07cm);
\filldraw[color=black] (1,-2) circle (.07cm);
\filldraw[color=black] (1,2) circle (.07cm);
\filldraw[color=black] (2,-2) circle (.07cm);
\filldraw[color=black] (2,-1) circle (.07cm);
\filldraw[color=black] (2,0) circle (.07cm);
\filldraw[color=black] (2,1) circle (.07cm);
\filldraw[color=black] (2,2) circle (.07cm);
\draw[->,thick] (-2,-2) -- (-2,-2.6);
\draw[->,thick] (-2,-1) -- (-2.6,-1);
\draw[->,thick] (-2,0) -- (-2.6,0);
\draw[->,thick] (-2,1) -- (-2.6,1);
\draw[->,thick] (-2,2) -- (-2.6,2);
\draw[->,thick] (-1,-2) -- (-1,-2.6);
\draw[->,thick] (-1,2) -- (-1,2.6);
\draw[->,thick] (0,-2) -- (0,-2.6);
\draw[->,thick] (0,2) -- (0.6,2);
\draw[->,thick] (1,-2) -- (1,-2.6);
\draw[->,thick] (1,2) -- (1,2.6);
\draw[->,thick] (2,-2) -- (2.6,-2);
\draw[->,thick] (2,-1) -- (2.6,-1);
\draw[->,thick] (2,0) -- (2.6,0);
\draw[->,thick] (2,1) -- (2.6,1);
\draw[->,thick] (2,2) -- (2,2.6);
\filldraw[color=blue] (0,0) circle (.11cm);
\end{tikzpicture}

    \end{tabular}
    \vspace{.5\baselineskip}

    \noindent By definition, the initial rotor directions of all squares $C_i$,
    $1 \leq i \leq L$,  is \Cin.
    We decompose the deterministic walk in phases such that after each phase
    the walk is at a vertex $(0,y)$ for some $y$ with $1\leq y\leq L$,
    every square $C_i$, $1 \leq i \leq L$, has a well-defined
    state (\Cin, \Ccycle or \Cout), and the rotor at $(0,0)$ is \myright.
    The first phase has length one.
    Hence after the first phase the rotors are
    $(C_1,C_2,C_3,\ldots)=(\Fin,\Cin,\Cin,\ldots)$ where
    the underline in $\Fin$ marks the current position of the walk.
    We now observe the following three simple rules which can be easily proven by induction.
    \begin{enumerate}
        \setlength{\itemsep}{0pt}
        \setlength{\parskip}{0pt}
        \item If the walk is at $(0,1)$ and $C_1=\Cin$,
              then after eight steps the walk is at $(0,2)$ and $C_1=\Ccycle$.
              Or in short: $(\Fin,\ldots) \xLongrightarrow{8} (\Fcycle,\ldots)$.     \item If the walk is at $(0,y)$, $y\geq1$, and $C_y=\Ccycle$,
              then after $8y+1$ steps the walk is at $(0,y+1)$ and $C_y=\Cout$.
              Or in short: $(\ldots,\Fcycle,C_{y+1},\ldots)
              \xLongrightarrow{8y+1} (\ldots,\Cout,\Fstyle{C_{y+1}},\ldots)$.
        \item If the walk is at $(0,y)$, $y\geq2$, and $C_y=\Cin$ as well as $C_{y-1}=\Cout$,
              then after $24y-17$ steps the walk is at $(0,y-1)$ and $C_y=\Ccycle$ as well as $C_{y-1}=\Cin$.
              Or in short: $(\ldots,C_{y-1},\Fin,\ldots)
              \xLongrightarrow{24y-17} (\ldots,\Fin,\Ccycle,\ldots)$.
    \end{enumerate}

    \noindent \tabref{lower:torus}
    shows the first twelve phases applying above three rules.
    In the introduced short notation, after the first phase the rotors are
    $(\Fin,\Cin^{L-1})$.  The second phase applies rule (i) and reaches
    $(\Fin,\Cin^{L-1}) \xLongrightarrow{8} (\Fcycle,\Cin^{L-1})$.
    The three subsequent phases can be described as follows
    (corresponding to \figref{lower:torus} \subref{fig:torus:9}--\subref{fig:torus:57}):
    \[
        (\Fcycle,\Cin^{L-1})
        \xLongrightarrow{9}
        (\Cout,\Fin,\Cin^{L-2})
        \xLongrightarrow{31}
        (\Fin,\Ccycle,\Cin^{L-2})
        \xLongrightarrow{8}
        (\Fcycle,\Ccycle,\Cin^{L-2}).
    \]
    These three phases (or 48 steps) already reveal the general pattern.
    By induction one can prove for all $k$ with $1\leq k<L$:
    \begin{align*}
        (\Fcycle,\Ccycle^{k-1}, \Cin^{L-k})
        &\xLongrightarrow{\sum_{y=1}^k (8y+1)}
        (\Cout^k, \Fin, \Cin^{L-k-1})
        \xLongrightarrow{\sum_{y=2}^{k+1} (24y-17)}
        (\Fin,\Ccycle^{k}, \Cin^{L-k-1})\\
        &\xLongrightarrow{8}
        (\Fcycle,\Ccycle^{k}, \Cin^{L-k-1}).
    \end{align*}
    This shows that the deterministic walk needs
    $\sum_{y=1}^k (8y+1) + \sum_{y=2}^{k+1} (24y-17) + 8
    = 8\,(2k+1)\,(k+1)$ steps to go from
    $(\Fcycle,\Ccycle^{k-1}, \Cin^{L-k})$
    to
    $(\Fcycle,\Ccycle^{k}, \Cin^{L-k-1})$.
    To get a lower bound on the deterministic cover time,
    we bound the time to reach $(0,L)$ with $C_L=\Ccycle$:
    \begin{align*}
        (\Fin,\Cin^{L-1})
        &\xLongrightarrow{8}
        (\Fcycle,\Cin^{L-1})
        \xLongrightarrow{\sum_{k=1}^{L-1} 8\,(2k+1)\,(k+1)}
        (\Fcycle,\Ccycle^{L-1})\\
        &\xLongrightarrow{\sum_{y=1}^{L-1} 8y+1}
        (\Ccycle^{L-1},\Fcycle).
    \end{align*}
    After $(\Ccycle^{L-1},\Fcycle)$ is reached, the deterministic walk
    only needs $7L$ further steps to go from $(0,L)$ along $C_L=\Ccycle$
    to the last uncovered vertex $(L,L)$.  This gives an overall lower
    bound for the deterministic cover time of
    \[
        1 +
        8 +
        \tsum_{k=1}^{L-1} 8\,(2k+1)\,(k+1) +
        \tsum_{y=1}^{L-1} 8y+1 +
        7L
        =
        \tfrac{16}{3} L^3 + 8 L^2 + \tfrac{8}{3}L
        =
        \tfrac{2}{3} \, (n^{3/2} - \sqrt{n}).
        \qedhere
    \]
\end{proof}

\newcommand{\textlowerhypercube}{
    For hypercubes, $\dVC(G)=\Omega(n \log^2 n)$.
}
\mythm{lower:hypercube}{\textlowerhypercube}
\begin{proof}
    We consider the $d$-dimensional hypercube with $n=2^d$
    vertices corresponding to bitstrings $\{0,1\}^d$. A pair of vertices is connected
    if their bitstrings differ in exactly one bit.

\begin{figure}[p]
    \rotatebox{90}{\scalebox{.88}{
    \begin{tikzpicture}[thick,level distance=13mm,
        level 1/.style={sibling distance=40mm},
        level 2/.style={sibling distance=10.5mm},
        level 3/.style={sibling distance=3.5mm},
        level 4/.style={sibling distance=1.8mm},
        grow=down]
        \tikzstyle{every node}=[font=\tiny]
        \node{00000}
        child { node {00001}
        child { node[rotate=270] {00011}
        child { node[rotate=270] {00111}
        child { node[rotate=270] {01111}
        child { node[rotate=270] {11111} } }
        child { node[rotate=270] {10111}
        child { node[rotate=270] {11111} } } }
        child { node[rotate=270] {01011}
        child { node[rotate=270] {01111} }
        child { node[rotate=270] {11011}
        child { node[rotate=270] {11111} } } }
        child { node[rotate=270] {10011}
        child { node[rotate=270] {10111} }
        child { node[rotate=270] {11011} } } }
        child { node[rotate=270] {00101}
        child { node[rotate=270] {00111} }
        child { node[rotate=270] {01101}
        child { node[rotate=270] {01111} }
        child { node[rotate=270] {11101}
        child { node[rotate=270] {11111} } } }
        child { node[rotate=270] {10101}
        child { node[rotate=270] {10111} }
        child { node[rotate=270] {11101} } } }
        child { node[rotate=270] {01001}
        child { node[rotate=270] {01011} }
        child { node[rotate=270] {01101} }
        child { node[rotate=270] {11001}
        child { node[rotate=270] {11011} }
        child { node[rotate=270] {11101} } } }
        child { node[rotate=270] {10001}
        child { node[rotate=270] {10011} }
        child { node[rotate=270] {10101} }
        child { node[rotate=270] {11001} } } }
        child { node {00010}
        child { node[rotate=270] {00011} }
        child { node[rotate=270] {00110}
        child { node[rotate=270] {00111} }
        child { node[rotate=270] {01110}
        child { node[rotate=270] {01111} }
        child { node[rotate=270] {11110}
        child { node[rotate=270] {11111} } } }
        child { node[rotate=270] {10110}
        child { node[rotate=270] {10111} }
        child { node[rotate=270] {11110} } } }
        child { node[rotate=270] {01010}
        child { node[rotate=270] {01011} }
        child { node[rotate=270] {01110} }
        child { node[rotate=270] {11010}
        child { node[rotate=270] {11011} }
        child { node[rotate=270] {11110} } } }
        child { node[rotate=270] {10010}
        child { node[rotate=270] {10011} }
        child { node[rotate=270] {10110} }
        child { node[rotate=270] {11010} } } }
        child { node {00100}
        child { node[rotate=270] {00101} }
        child { node[rotate=270] {00110} }
        child { node[rotate=270] {01100}
        child { node[rotate=270] {01101} }
        child { node[rotate=270] {01110} }
        child { node[rotate=270] {11100}
        child { node[rotate=270] {11101} }
        child { node[rotate=270] {11110} } } }
        child { node[rotate=270] {10100}
        child { node[rotate=270] {10101} }
        child { node[rotate=270] {10110} }
        child { node[rotate=270] {11100} } } }
        child { node {01000}
        child { node[rotate=270] {01001} }
        child { node[rotate=270] {01010} }
        child { node[rotate=270] {01100} }
        child { node[rotate=270] {11000}
        child { node[rotate=270] {11001} }
        child { node[rotate=270] {11010} }
        child { node[rotate=270] {11100} } } }
        child { node {10000}
        child { node[rotate=270] {10001} }
        child { node[rotate=270] {10010} }
        child { node[rotate=270] {10100} }
        child { node[rotate=270] {11000} } } ;
        \draw[thick,loosely dashed] (-10.1,-1.88) --(+9.8,-1.88)
         node[anchor=west,fill=white,text width=1.75cm,text ragged] {depth limit for $1$-st phase};
        \draw[thick,loosely dashed] (-10.1,-3.25) -- (+9.8,-3.25)
         node[anchor=west,fill=white,text width=1.75cm,text ragged] {depth limit for $2$-nd phase};
        \draw[thick,loosely dashed] (-10.1,-4.55) -- (+9.8,-4.55)
         node[anchor=west,fill=white,text width=1.75cm,text ragged] {depth limit for $3$-rd phase};
        \draw[thick,loosely dashed] (-10.1,-5.85) -- (+9.8,-5.85)
         node[anchor=west,fill=white,text width=1.75cm,text ragged] {depth limit for $4$-th phase};
        \end{tikzpicture}
    }}
    \hspace*{1.5cm}
    \rotatebox{90}{\scalebox{.88}{
        \tiny
        \begin{tabular}{|l|p{19.7cm}|}
        \hline
        $1$-st phase &
        00000 00001 00000 00010 00000 00100 00000 01000 00000 10000 \\
        \hline
        $2$-nd phase &
        00000 00001 00011 00001 00101 00001 01001 00001 10001 00001 00000 00010 00011 00010 00110 00010 01010 00010 10010 00010 00000 00100 00101 00100 00110 00100 01100 00100 10100 00100 00000 01000 01001 01000 01010 01000 01100 01000 11000 01000 00000 10000 10001 10000 10010 10000 10100 10000 11000 10000 \\
        \hline
        $3$-rd phase &
        00000 00001 00011 00111 00011 01011 00011 10011 00011 00001 00101 00111 00101 01101 00101 10101 00101 00001 01001 01011 01001 01101 01001 11001 01001 00001 10001 10011 10001 10101 10001 11001 10001 00001 00000 00010 00011 00010 00110 00111 00110 01110 00110 10110 00110 00010 01010 01011 01010 01110 01010 11010 01010 00010 10010 10011 10010 10110 10010 11010 10010 00010 00000 00100 00101 00100 00110 00100 01100 01101 01100 01110 01100 11100 01100 00100 10100 10101 10100 10110 10100 11100 10100 00100 00000 01000 01001 01000 01010 01000 01100 01000 11000 11001 11000 11010 11000 11100 11000 01000 00000 10000 10001 10000 10010 10000 10100 10000 11000 10000 \\
        \hline
        $4$-th phase &
        00000 00001 00011 00111 01111 00111 10111 00111 00011 01011 01111 01011 11011 01011 00011 10011 10111 10011 11011 10011 00011 00001 00101 00111 00101 01101 01111 01101 11101 01101 00101 10101 10111 10101 11101 10101 00101 00001 01001 01011 01001 01101 01001 11001 11011 11001 11101 11001 01001 00001 10001 10011 10001 10101 10001 11001 10001 00001 00000 00010 00011 00010 00110 00111 00110 01110 01111 01110 11110 01110 00110 10110 10111 10110 11110 10110 00110 00010 01010 01011 01010 01110 01010 11010 11011 11010 11110 11010 01010 00010 10010 10011 10010 10110 10010 11010 10010 00010 00000 00100 00101 00100 00110 00100 01100 01101 01100 01110 01100 11100 11101 11100 11110 11100 01100 00100 10100 10101 10100 10110 10100 11100 10100 00100 00000 01000 01001 01000 01010 01000 01100 01000 11000 11001 11000 11010 11000 11100 11000 01000 00000 10000 10001 10000 10010 10000 10100 10000 11000 10000 \\
        \hline
        $5$-th phase &
        00000 00001 00011 00111 01111 11111 done\\
        \hline
        \end{tabular}
    }}

    \caption{Illustration for the proof of \thmref{lower:hypercube} showing a
        deterministic walk on the hypercube~$H_5$.
        The walk starts at $00000$ and all rotor sequences are sorted lexicographically.}
    \label{fig:H5}
\end{figure}

    We first note that if
    all rotors are set up the same way (w.r.t.\ the dimension),
    all vertices of the hypercube are covered within $\O(n)$ steps.
    Hence we choose the rotor sequences differently,
    that is,
    let all rotor sequences be sorted lexicographically and
    let the walk start at $0^d$.  We bound the time to reach~$1^d$.
    The chosen rotor sequence implies that every vertex~$x$
    first visits the neighboring vertices~$y$ with
    $|y|_1<|x|_1$ and then the neighboring vertices~$y$ with $|y|_1>|x|_1$.
    The resulting walk can be nicely described as a sequence of depth first searches (DFS)
    on a tree~$T_d$ which is defined as follows.  The vertices are bitstrings $\{0,1\}^d$
    corresponding to the vertices of~$H_d$.  If there is an edge $\{u,v\}$ in~$T_d$, then
    $\{u,v\}$ is an edge in~$H_d$, too (but not the other way around).
    The root of~$T_d$ is~$0^d$.
    In level~$i$ there are only vertices~$x$ with $|x|_1=i$.
    Every vertex~$x$ has either 0 or~$|x|_0$ children
    corresponding to neighbors~$y$ of~$x$ with $|y|_1=|x|_1+1$.
    The root has~$d$ children.
    All other vertices~$x$ have~$|x|_0$ children iff the single bit in which it differs from its parent
    is left of the leftmost~$1$-bit.  That is, the tree is truncated if
    the bit which is flipped from the parent to the child is not a leading zero.
    Note that this implies that on level~$i$ there are~$i$ copies of each vertex with $|x|_1=i$
    in~$T_d$ and only the leftmost recurses to the next level.
    \figref{H5} on page~\pageref{fig:H5} shows
    the deterministic walk on~$H_5$ and the corresponding tree~$T_5$
    as defined above.

    We decompose the deterministic walk on~$H_d$ in~$d$ phases such that phase~$i$
    has length $2\sum_{j=1}^i j\,\binom{d}{j}$ and show the the following:
    \begin{enumerate}
    \setlength{\itemsep}{0pt}
    \setlength{\parskip}{0pt}
    \item Initially, the rotors corresponding to the vertices in~$T_d$ point towards
          their respective parent.
    \item After the $i$\nobreakdash-th phase, all rotors of vertices~$x$ with $|x|_1> i$
          point towards their respective parent in~$T_d$ while
          all rotors of vertices $x$ with $|x|_1\leq i$ point to their leftmost child in~$T_d$,
          i.e., to their lexicographic smallest neighbor with one more bit set to one.
    \item The $i$\nobreakdash-th phase of the deterministic walk on~$H_d$
          visits the same vertices in the same order as a
          DFS on~$T_d$ with limited depth~$i$.
    \item $T_d$ has $i\,\binom{d}{i}$ vertices on level~$i$ with $i>0$.
    \end{enumerate}

    \noindent
    (i) holds by definition of~$T_d$ for the chosen rotor sequence.
    We now prove simultaneously (ii) and (iii) by induction.
    For the first phase of~$2d$ steps it is easy to see as it
    alternates between~$0^d$ and all nodes~$x$ with $|x|_1=1$ (in increasing order).
    This phase ends at the root~$0^d$ whose rotor then points at $0^{d-1}1$ (as initially).
    The rotors of nodes~$x$ with $|x|_1=1$ now point to the lexicographically
    smallest neighbor~$y$ with $|y|_1=2$.  Note that for every vertex~$y$ with $|y|_1=2$
    there are two vertices~$x$ with $|x|_1=1$ whose rotor points to~$y$.
    Let us now assume (ii) and (iii) holds after the $i$\nobreakdash-th phase.
    Then all rotors of vertices~$x$ with $|x|_1\leq i$ point downwards in~$T_d$
    to their leftmost child which is also the lexicographically smallest.
    It is obvious that then the deterministic walk on~$H_d$ exactly performs a DFS
    of limited depth~$i+1$ on~$H_d$ up to the point when a vertex is visited the
    second time within this phase.  If this vertex~$x$ is in the last visited layer, i.e.,
    $|x|=i+1$, then its rotor points upwards and the DFS is not disturbed.
    If a vertex~$x$ with $|x|\leq i$ is visited the second time, then by definition
    of~$T_d$, this vertex has no children.  Hence the DFS in~$T_d$ goes
    back to its parent which is the same vertex to which the deterministic walk
    in~$H_d$ moves as the rotor is already pointing at its second neighbor in its rotor sequence.
    The same holds for the third, fourth, and so on visit of a vertex.
    Overall, this $(i+1)$\nobreakdash-th phase visits all vertices in~$T_d$ up to depth~$(i+1)$ and
    changes the rotors of vertices~$x$ with $|x|_1=(i+1)$ downwards.
    This proves (ii) and (iii).  (iv) immediately follows from the fact that there are
    $i$ copies of each vertex~$x$ with $|x|_1=i$ on level~$i$ in $T_d$.

    The number of vertices visited in phase~$i$ with $i<d$ is twice the number of edges up to depth~$i$.
    Therefore the length of phase $i<d$ is
    $2\,\sum_{j=1}^{i} j \binom{d}{j}$
    and the total length of the deterministic walk on~$H_d$
    until~$1^d$ is discovered is
    \begin{align*}
         &d + 1 + 2\,\sum_{i=0}^{d-1} \sum_{j=1}^{i} j \,\tbinom{d}{j}
        = d + 1 + 2\, \sum_{j=0}^d j\,(d-j)\,\tbinom{d}{j}
        = d + 1 + 2 d \, \sum_{j=0}^{d-1} j \,\tbinom{d-1}{j}\\
        &\qquad\qquad= d + 1 + d \, (d-1) \, 2^{d-1}
        = (n \log^2 n)/2 + \O( n \log n).
        \qedhere
    \end{align*}
\end{proof}

\section{Short Term Behavior}
\label{sec:shortterm}

\noindent
For random walks, \citet{BF93,BF96journal} examined how fast a random walk
covers a certain number of vertices and/or edges.
\tabref{shortterm} provides an overview of their bounds compared to ours.
For the deterministic walk, we can show the following result about the rate at which
the walk discovers new edges in the short term.

\newcommand{\textshortterm}{
    All deterministic walks with $\tkappa=1$
    visit~$\cN$ distinct vertices within
    $
     \min \left\{ \mbox{$\Oh(\cN \Delta + (\cN \Delta/\delta)^2 )$}  ,
                  \mbox{$\Oh( m + (m/\delta)^2 )$} \right\}
    $
    steps
    and $\cM$ distinct edges within $\Oh( \cM + ( \cM /\delta)^2 )$ steps.
}
\mythm{shortterm}{\textshortterm}

The proof of \thmref{shortterm} is based on some combinatorial property
of the deterministic walk.  Let us first observe a simple graph-theoretic lemma.

\begin{lem}\label{lem:graphfirst}\label{lem:graphsecond}
    For any graph $G=(V,E)$, vertex $v \in V$, and $i \geq 0$ with
    $\Gamma^{i+1}(v) \neq \emptyset$, we have
    \begin{align*}
        |E(\Gamma^{i}(v)) \cup E(\Gamma^{i+1}(v)) \cup E(\Gamma^{i+2}(v))  |
        &\geq \delta^2/6.
    \end{align*}
\end{lem}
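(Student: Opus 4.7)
The plan is to exhibit many edges in $E(\Gamma^{i}(v)) \cup E(\Gamma^{i+1}(v)) \cup E(\Gamma^{i+2}(v))$ by examining only the local neighborhood of a single vertex. Since $\Gamma^{i+1}(v) \neq \emptyset$, pick any $u \in \Gamma^{i+1}(v)$. Since distances in $G$ change by at most one along an edge, every neighbor $w \in \Gamma(u)$ satisfies $\dist(v,w) \in \{i, i+1, i+2\}$, so
\[
\Gamma(u) \subseteq \Gamma^{i}(v) \cup \Gamma^{i+1}(v) \cup \Gamma^{i+2}(v).
\]
Consequently, any edge incident to some $w \in \Gamma(u)$ has at least one endpoint in $\Gamma^{i}(v) \cup \Gamma^{i+1}(v) \cup \Gamma^{i+2}(v)$ and hence lies in the target union.

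Next, I would lower-bound the number of distinct edges incident to $\Gamma(u)$ by a standard handshake-style double count: $\sum_{w \in \Gamma(u)} \deg(w) \geq |\Gamma(u)| \cdot \delta \geq \delta^{2}$, and every edge is counted at most twice in this sum (once per endpoint), giving at least $\delta^{2}/2 \geq \delta^{2}/6$ distinct edges.

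There is essentially no obstacle here; the argument is a two-line observation plus a handshake bound, and the slack between $\delta^{2}/2$ and the stated $\delta^{2}/6$ suggests the authors simply state a loose constant that is convenient for their downstream use in \thmref{shortterm}. The only thing to double-check carefully is the convention for $E(S)$ (edges with \emph{at least} one endpoint in $S$, per the notation in Section~2.3), which is what makes the ``incident to $\Gamma(u)$'' edges fall into the target union without any further case analysis on whether the other endpoint happens to lie in a different shell.
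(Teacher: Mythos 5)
Your proof is correct and in fact slightly sharper than the paper's. Both arguments open identically --- fix $u\in\Gamma^{i+1}(v)$ and observe that $\Gamma(u)\subseteq\Gamma^{i}(v)\cup\Gamma^{i+1}(v)\cup\Gamma^{i+2}(v)$ --- but they diverge in the counting step. The paper first pigeonholes $u$'s at least $\delta$ neighbors into the three shells to find a single $j\in\{i,i+1,i+2\}$ with $|\Gamma^{j}(v)|\geq\delta/3$, and then applies the handshake bound to that one shell, getting $|E(\Gamma^{j}(v))|\geq|\Gamma^{j}(v)|\,\delta/2\geq\delta^{2}/6$; the factor $1/3$ lost in the pigeonhole is exactly where the constant $1/6$ comes from. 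You instead apply the handshake bound directly to $\Gamma(u)$, which already sits inside the union of the three shells, and obtain at least $\delta^{2}/2$ distinct edges, each incident to some shell and hence in $E(\Gamma^{i}(v))\cup E(\Gamma^{i+1}(v))\cup E(\Gamma^{i+2}(v))$. Your reading of $E(S)$ as ``edges with at least one endpoint in $S$'' is the intended one: the paper's own inequality $|E(\Gamma^{j}(v))|\geq|\Gamma^{j}(v)|\,\delta/2$ relies on the same convention. Beyond being a line shorter, your route yields the better constant $\delta^{2}/2$, which makes no difference downstream in \thmref{shortterm} but does confirm your suspicion that the stated $\delta^{2}/6$ is simply a convenient loose bound.
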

\begin{proof}
    Fix a vertex $u \in \Gamma^{i+1}(v)$.
    Clearly, there exists a $j \in \{i,i+1,i+2\}$ such that
    $|E(u,\Gamma^{j}(v))| \geq \delta/3$.
    Since this implies that $|\Gamma^{j}(v)| \geq \delta/3$, we have
    $
        |E(|\Gamma^{j}(v))|
        \geq |\Gamma^{j}(v)| \delta/2
        \geq \delta^2 / 6
    $
    and the claim follows.
\end{proof}

\newcommand{\twolinebrace}{$\left\lbrace \begin{minipage}[c]{0.001mm}\vspace*{1.9\baselineskip}\end{minipage} \right.$}
\newcommand{\threelinebrace}{\scalebox{1}[1.55]{$\left\lbrace \begin{minipage}[c]{0.001mm}\vspace*{1.9\baselineskip}\end{minipage} \right.$}}

\begin{table*}[bt]
    \footnotesize
    \begin{center}
    \begin{tabular}{ll@{}l@{\ }l|l@{}l@{\ }l|}
    \cline{2-7}
    & \multicolumn{3}{|c|}{\bf Random Walk}
    & \multicolumn{3}{|c|}{\bf Deterministic Walk} \\
    \hline
    \hline
    \multicolumn{1}{|l|}{time to visit $\cN$ vertices}
    & \multirow{2}{*}{\twolinebrace}
    & $\O(\cN^3)$
    & \cite[Thm.~1.1]{BF96journal}
    & \multirow{2}{*}{\twolinebrace}
    & $\Oh(\cN \Delta + (\cN \Delta/\delta)^2 )$
    & (\shortthmref{shortterm})
    \\
    \multicolumn{1}{|l|}{on arbitrary graphs}
    &
    & $\O(m \cN)$
    & \cite[Thm.~1.4]{BF96journal}
    &
    & $\Oh( m + (m/\delta)^2 )$
    & (\shortthmref{shortterm})
    \\

    \hline
    \multicolumn{1}{|l|}{\multirow{3}{*}{\begin{tabular}{@{}l}time to visit $\cM$ edges\\
        on arbitrary graphs\end{tabular}}}
    & \multirow{3}{*}{\threelinebrace}
    & $\O(\cM^2)$
    & \cite[Thm.~1.2]{BF96journal}
    &
    & \multirow{3}{*}{$\Oh( \cM + ( \cM /\delta)^2 )$}
    & \multirow{3}{*}{(\shortthmref{shortterm})}
    \\
    \multicolumn{1}{|l|}{}
    && $\O(n \cM)$
    & \cite[Thm.~1.4]{BF96journal}
    & \multicolumn{3}{|c|}{}\\
    \multicolumn{1}{|c|}{}
    && $\O( \cM + ( \cM^2 \log \cM)/\delta )$
    & \cite[Thm.~5]{BF93}
    & \multicolumn{3}{|c|}{}
    \\
    \hline
    \end{tabular}
    \end{center}
    \caption{Short term behavior of random and deterministic walk. For the time
    to cover $\cN$ vertices, the bounds for the random walk are always as good as the corresponding bounds for the deterministic walk.
    The two respective upper bounds for the time to cover~$\cM$ edges are incomparable
    in general.
    }
    \vspace*{-1.5\baselineskip}
    \label{tab:shortterm}
\end{table*}

In the proof of \thmref{shortterm} we will also need the following
property borrowed from \citet{YanovskiWB03}.
\begin{lem}
    \label{lem:balance}\label{lem:balanceedge}
    For any time $t$ and edges $\{u,v\},\{v,w\} \in E$ it holds that
    \begin{align*}
        |\tN_t(u \to v) - \tN_t(v \to w)| &\leq 2.
    \end{align*}
\end{lem}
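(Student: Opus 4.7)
My plan is to combine two balance properties of the deterministic walk and conclude via the triangle inequality. Let $L_t(v) := \sum_{y\in\Gamma(v)} \tN_t(v\to y)$ denote the total number of exits from $v$ up to time $t$.

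The first property is the \emph{rotor balance at $v$}: since $\tkappa=1$, the rotor sequence at $v$ has length $\td(v)=\deg(v)$ with each neighbor appearing exactly once, so after $L_t(v)$ exits from $v$, each neighbor of $v$ has been served either $\lfloor L_t(v)/\deg(v)\rfloor$ or $\lceil L_t(v)/\deg(v)\rceil$ times. In particular,
\[
    |\tN_t(v\to u) - \tN_t(v\to w)| \leq 1.
\]
This is immediate from the definition of the deterministic walk in Section~\ref{sec:model:det}.

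The second property is a \emph{directional balance} on the edge $\{u,v\}$: $|\tN_t(u\to v) - \tN_t(v\to u)| \leq 1$. The same rotor argument applied at $u$ yields $\tN_t(u\to v) \in \{\lfloor L_t(u)/\deg(u)\rfloor, \lceil L_t(u)/\deg(u)\rceil\}$, and analogously $\tN_t(v\to u)$ is determined to within one by $L_t(v)/\deg(v)$. The claim therefore reduces to establishing the adjacency bound
\[
    \bigl|L_t(u)/\deg(u) - L_t(v)/\deg(v)\bigr| < 1.
\]
I would prove this by induction on $t$, combining (i) the rotor property — every complete cycle of $v$'s rotor serves $u$ exactly once, and symmetrically at $u$ — with (ii) the flow-balance identity $\bigl|\sum_{y}\tN_t(v\to y) - \sum_x \tN_t(x\to v)\bigr| \leq 1$ at every vertex, so that any drift of the per-degree exit rate across one endpoint of the edge forces a correcting visit from the other side.

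Combining the two properties via the triangle inequality yields the claim,
\[
    |\tN_t(u\to v) - \tN_t(v\to w)|
    \leq |\tN_t(u\to v) - \tN_t(v\to u)| + |\tN_t(v\to u) - \tN_t(v\to w)|
    \leq 1 + 1 = 2.
\]

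The main obstacle is the adjacency bound on $L_t(\cdot)/\deg(\cdot)$. The naive ``interleaving'' argument — that consecutive $u\to v$ traversals must be separated by a $v\to u$ traversal — fails because the walk may return from $v$ to $u$ via a long detour through other vertices, so that one direction of the edge can accumulate several extra traversals in isolation. The resolution exploits the fact that the stationary distribution of the corresponding random walk is $\pi_v = \deg(v)/(2m)$, so the per-degree exit rate is the correct common scale across adjacent vertices; the rotor structure then forces this alignment to hold strictly at every finite time, not merely asymptotically. This is the delicate combinatorial step and is exactly where the bound of $2$ (rather than something larger) comes from.
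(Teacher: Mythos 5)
There is a genuine gap. Your decomposition correctly isolates where the difficulty lies, but it does not resolve it. The first ingredient --- rotor balance among the out\nobreakdash-edges of $v$, i.e.\ $|\tN_t(v\to u)-\tN_t(v\to w)|\le 1$ when $\tkappa=1$ --- is indeed immediate. Everything else, however, rests on the ``directional balance'' $|\tN_t(u\to v)-\tN_t(v\to u)|\le 1$, and this is never proved: $\tN_t(u\to v)$ is governed by the rotor at $u$ while $\tN_t(v\to w)$ is governed by the rotor at $v$, and no single-vertex bookkeeping (rotor fairness plus the entry/exit balance $|\sum_x\tN_t(x\to v)-\sum_y\tN_t(v\to y)|\le 1$) links the two. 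Your reduction to the ``adjacency bound'' $|L_t(u)/\deg(u)-L_t(v)/\deg(v)|<1$ is essentially a restatement of the lemma itself (rotor fairness determines each traversal count from the corresponding ratio up to $\pm1$), and the announced ``induction on $t$'' is exactly the global argument that constitutes the proof in \citet{YanovskiWB03}; the paper itself does not prove the lemma but imports it from their Corollary~4. An appeal to the stationary distribution is a statement about long-run averages and gives no finite-time combinatorial control. It is also telling that the cited bound is $2$ even in the case $w=u$: if the same-edge bound of $1$ you assume were available, the stronger statement would presumably be the one on record.

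There are, in addition, two quantitative slips. First, the strict inequality $<1$ already fails on the cycle: in the construction of \thmref{lower:cycle}, immediately after the walk traverses $i\to i+1$ for the first time one has $L_t(i)/2-L_t(i+1)/2=1$ exactly. Second, even granting $|L_t(u)/\deg(u)-L_t(v)/\deg(v)|\le 1$, this does \emph{not} imply $|\tN_t(u\to v)-\tN_t(v\to u)|\le 1$: the floor/ceiling ambiguity on the two sides can contribute a difference of $2$ (ratios $3.5$ and $2.6$ are within $1$ of each other yet allow counts $4$ and $2$), so your final triangle inequality would only yield $3$, not $2$. Ironically, the adjacency bound with $\le 1$, if established, would give the lemma's bound of $2$ directly without any detour through directional balance --- but establishing it is precisely the open step.
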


\begin{proof}[Proof of \thmref{shortterm}]
We start with the second claim. Assume that there is an edge $e=(u,v)\in E$ with $\tN_t(e) \geq 13 \sqrt{t}/\delta$. Then we know that for all adjacent edges $e'=(v,w)$ that $\tN_t(e') \geq 13 \sqrt{t}/\delta-2$ as $|\tN_t(e) - \tN_t(e')| \leq 2$
by \cite[Cor.~4]{YanovskiWB03}.
More generally, for an edge $e=(x,y)$ with $\dist(u,x)=i$ we have
\begin{equation}
    \label{eq:alledges}
    \tN_t(x\to y) \geq 13 \sqrt{t}/\delta - 2 i.
\end{equation}
As in the proof
of the first claim, we may assume that
$\Gamma^i(u) \neq \emptyset$ for $1 \leq i \leq 2 \sqrt{t}/\delta$,
as otherwise all edges have been traversed already by \eq{alledges}.
With \lemref{graphsecond},
\begin{align*}
  t &= \sum_{\{u,v\} \in E} \tN_{t}(u\to v) \\
    &\geq \sum_{k=0}^{2 \sqrt{t}/(3\delta)}
        |E(\Gamma^{3k}(v)) \cup E(\Gamma^{3k+1}(v)) \cup E(\Gamma^{3k+2}(v)) |
        \big( 13 \sqrt{t} / \delta - 6 (k+1) \big) \\
    &> \sum_{k=0}^{2 \sqrt{t}/(3\delta) - 1}  \frac{\delta^2}{6}
        \left( 13 \sqrt{t} /\delta - 6 (k+1)  \right)
    = \frac{1}{6} \delta^2 \bigg( \frac{2\sqrt{t}}{3 \delta} \bigg) 9 \frac{ \sqrt{t} }{\delta}
    = t,
\end{align*}
gives a contradiction.
Therefore we conclude that no edge is visited more often than $\max\{1, 13
\sqrt{t}/\delta \}$. Hence after $t$ steps, at least $\min \{t, \sqrt{t} \delta/13 \}$
distinct edges must be visited. Choosing $t:= \max \big\{ \cM, 13^2 \, \cM / \delta^2  \big\}$,
the deterministic walk visits at least $\cM$ distinct edges.

For the first claim, we observe that if the random walk visits at least $\cM$ distinct edges, then it also visits at least $\cM/\Delta$ distinct vertices. Hence to visit $\cN$ vertices, we have to visit at least $\min \{\cN \Delta, m \}$ edges. As shown above, to visit $\min \{\cN \Delta, m \}$ edges, we have to spend \[ \min \left\{ \Oh(\cN \Delta + (\cN \Delta/\delta)^2 )  , \Oh( m + (m/\delta)^2 ) \right\}
\] steps.
\end{proof}

At the end of \secref{ourresults}, we give an example how the bounds for the
random and deterministic walk compare to each other.
For expander graphs, much stronger results
are known.  There, $\max_{u,v} \rH(u,v) = \Oh(n)$ and hence by a simple first-moment calculation one obtains that after $\Oh(n)$ steps, $cn$ vertices are visited (where~$0 < c < 1$ is any constant).  We remark that a similar result can be shown for
the deterministic walk, that is, after $\Oh(n \log \log n)$ steps
it visits $cn$ vertices of an expander graph (of constant degree).

\section{Discussion}

\noindent
We have analyzed the vertex and edge cover time of the deterministic random walk
and presented upper bounds for general graphs
based on the local divergence, expansion properties, and flows.
This is complemented with tight bounds for various common graph classes.
It turns out that the deterministic random walk is surprisingly efficient though
it has a strong adversary (as the order of the rotors is completely arbitrary) and
it is not tailored for search problems (as it does not mark visited vertices).

In applications such as
broadcasting~\citep{DFS08,DFS09,DFKS09,ADHP2009,HF2009}
and sorting~\citep{BarveGV97}
the quasirandom version of the
deterministic random walk seems to be especially efficient.
There, the first rotor direction is chosen at random while the order of the
rotors stays arbitrary.
It would be interesting to quantify how much this quasirandom walk
can cover all vertices or edges faster than our deterministic one.

\newcommand{\FOCS}[2]{#1 Annual IEEE Symposium on Foundations of Computer Science (FOCS '#2)}
\newcommand{\STOC}[2]{#1 Annual ACM Symposium on Theory of Computing (STOC '#2)}
\newcommand{\SODA}[2]{#1 Annual ACM-SIAM Symposium on Discrete Algorithms (SODA '#2)}
\newcommand{\ICALP}[2]{#1 International Colloquium on Automata, Languages, and Programming (ICALP '#2)}
\newcommand{\ISAAC}[2]{#1 International Symposium on Algorithms and Computation (ISAAC '#2)}

\end{document}